\documentclass[a4paper,12pt]{article}
\usepackage[letter paper, margin=1 in]{geometry}
\usepackage{hyperref}
\usepackage{subcaption}
\usepackage{authblk}
\usepackage[numbers]{natbib}
\usepackage[utf8]{inputenc}
\usepackage{graphicx}
\usepackage{amsmath}
\usepackage[version=4]{mhchem}
\usepackage{siunitx}
\usepackage{longtable,tabularx}
\usepackage{microtype}
\usepackage{subcaption}
\usepackage{booktabs} 
\usepackage{float}
\usepackage[algo2e,ruled,vlined,linesnumbered]{algorithm2e}
\usepackage{amsfonts,amsthm}
\usepackage{bm,bbm}
\usepackage{cleveref}
\usepackage{algorithm}
\usepackage{algorithmic}
\usepackage[algo2e,ruled,vlined,linesnumbered]{algorithm2e}
\usepackage{svg}
\usepackage{tikz}
\usetikzlibrary{positioning,fit,calc,shadows}
\pgfdeclarelayer{base}
\pgfdeclarelayer{bg}
\pgfdeclarelayer{fg}
\pgfsetlayers{base,bg,main,fg}
\usepackage{url}
\usepackage{xcolor}

\newcommand{\x}{\mathbf{x}}

\newtheorem{proposition}{Proposition}

\title{REMAL: Residual Equilibrium Manifold Active Learning for Surrogate-Based Multidisciplinary Design Analysis}
\author{%
    Kail Yuan \thanks{Aerospace Engineering, Penn State, University Park, PA.} 
  \quad Ashwin Renganathan\thanks{Aerospace Engineering and the Institute for Computational and Data Sciences (ICDS), Penn State, University Park, PA. Corresponding author.}
}
\date{}

\begin{document}

\maketitle

\begin{abstract}
Multidisciplinary design analysis of coupled engineering systems requires the computation of equilibrium states in which all disciplinary coupling variables are mutually consistent. Conventional fixed-point iteration resolves this consistency problem separately at each design point, which can become expensive when disciplinary evaluations are costly and many analyses are required in outer-loop tasks such as multidisciplinary design optimization, uncertainty quantification, or digital twin updating. This paper introduces REMAL, a residual manifold surrogate modeling framework for coupled systems. Instead of approximating each discipline independently or directly learning converged coupling variables, the proposed method learns a surrogate model of the joint residual manifold via multitask Gaussian process models. An entropy-based active learning strategy selects additional residual evaluations near uncertain zero-contour regions, and equilibrium states for new design inputs are recovered by solving a nonlinear least squares optimization problem using only the trained surrogate. The method is evaluated on four engineering coupled system benchmarks: a satellite model, an aerostructural model, a finite-element gas-turbine heat-transfer and economics model, and a modified turbine model with added feedback coupling. Across these cases, REMAL consistently demonstrates the cost effectiveness when repeated evaluations of the fixed point across the design space are necessary. Theoretically, we show that, under mild assumptions, REMAL's predictive fixed point error is bounded.
\end{abstract}

\section{Introduction}
\label{sec:intro}
Engineered systems are typically multidisciplinary, involving two or more disciplines interacting nonlinearly. For example, an aircraft can be thought of as a system involving aerodynamics, structures, propulsion, and other subsystems tightly integrated leading to nonlinear interactions and system-level tradeoffs. 
Designing such systems while accounting for the underlying interdisciplinary interactions is widely known as multidisciplinary design optimization (MDO)~\cite{sobieszczanski-sobieski_sensitivity_1990}. 
These systems are characterized by \textit{coupling} between individual disciplines or ``components'', which may be either feed-forward (unidirectional) or feedback (bidirectional) in nature.
Mathematically, coupled systems can be represented as 
\begin{equation}
    \begin{alignedat}{2}
        y_i =& f_i(\mathbf{x}, \{y_j\}_{j\neq i}),~ \forall i, j\in [m] \qquad &\text{(state)} \\
        z =& g(\mathbf{x}, \{y_i^*\}_{i=1}^m) \qquad &\text{(output)}
        \label{eqn:md_system}
    \end{alignedat}
\end{equation}
where $\mathbf{x} \in \mathcal{X} \subset \mathbb{R}^d$ are the control parameters, $y_i \in \mathcal{Y}_i \subset 
\mathbb{R}^o$ are the state or ``coupling'' variables, $f_i:\mathcal{X}\times \mathcal{Y}_{j \neq i} \rightarrow \mathbb{R}^o,~\forall i,j \in[m]$ are the $m$ disciplines, and $z\in \mathbb{R}$ is the output. Typically, the \(f_i\) are physics-based simulation codes such as computational fluid dynamics (CFD) or finite-element method (FEM) solvers that are expensive to evaluate and expose no structure (that is, they are treated as black boxes).
A unique requirement for solving coupled multidisciplinary systems is the need to compute an
\emph{equilibrium state} $y_i^*$ for the system that satisfies $\|y_i - f_i(\mathbf{x}, \{y_j\}_{j\neq i})\| = 0,~\forall i$ -- thus, $y_i^*$ is a fixed point of the system of $m$ disciplines. 
In plain terms, all disciplines must be physically consistent with each other before we observe the output $z$ of the system. 
In the context of MDO, this process of determining the state of a coupled system such that interdisciplinary compatibility is satisfied is known as multidisciplinary design analysis (MDA).

Mathematically, finding an equilibrium state involves solving a system of nonlinear equations -- analytical solutions are rarely available and, when the \(f_i\) are black-box simulation codes, they are impossible to obtain.
Instead, typical approaches apply fixed-point iteration (FPI) using iterative algorithms such as Newton's method or the Gauss--Seidel method to solve numerically for fixed points of the system \cite{mdobook,Gray2019}.
These iterative solvers are standard in popular software packages for the analysis and optimization of coupled multidisciplinary systems such as OpenMDAO~\cite{Gray2019} and GEMSEO~\cite{gemseo}.
Because FPI requires repeated evaluation of feedback-coupled components to determine converged values of system coupling variables, such methods can become computationally intractable if one or more of these components are expensive to evaluate, as is often the case with real-world applications. 
This problem is compounded when multiple evaluations of the system are required as part of an optimization loop or other outer-loop applications such as uncertainty quantification.

Surrogate models have been used extensively to reduce the cost of multidisciplinary
design analysis and optimization, ranging from global response-surface or Kriging
approximations of expensive system-level quantities
\citep{simpson2001kriging,forrester2008engineering} to architectures in which the
coupled analysis is embedded inside multidisciplinary feasible (MDF) and individual discipline feasible (IDF) MDO formulations
\citep{martins2013multidisciplinary,Yao2012MDO}.  More recent work explicitly exploits the
coupling structure.  In efficient global multidisciplinary design optimization
(EGMDO), and its constrained extension, each disciplinary solver is replaced by an
adaptively enriched Gaussian-process surrogate, after which Bayesian optimization is
used to manage the induced uncertainty in the objective and constraints
\citep{dubreuil2020towards,cardoso2024constrained}.  For parameterized MDA with
high-dimensional exchanged fields, DPOD+I replaces disciplinary solvers by
proper-orthogonal-decomposition/interpolation surrogates and enriches the most
influential discipline surrogate \citep{berthelin2022disciplinary}.  Other
approaches reduce the burden of repeated fixed-point iterations in uncertainty
propagation by learning converged coupling variables or component-level surrogates
\citep{chaudhuri2018multifidelity,jakeman2022adaptive,Hu2017mdra}, by adaptive Bayesian or
Gaussian process discipline sampling for coupled-system uncertainty analysis
\citep{ghoreishi2017adaptive,ghoreishi2021bayesian,asadi2024active}, or by
identifying weak couplings that can be removed with limited loss of output accuracy
\citep{baptista2018optimal}.  

\citet{jakeman2022adaptive} describe two categories of existing approaches for surrogates of coupled systems that exploit the known coupling structure of multidisciplinary systems.
``Class-one'' surrogates \cite{chaudhuri2018multifidelity} construct a surrogate of the coupled system by mapping system inputs to coupling variables, which are computed using FPI. 
To recover system outputs, the original system is evaluated using the converged coupling variables predicted by the surrogate.
``Class-two'' surrogates \cite{simpson2001kriging, dubreuil2020towards, kyzyurova2018gpchain, sanson2019gpchain} build individual surrogate models for each discipline using a decoupled approach, which are then assembled together into a coupled system.
To recover system outputs using a class-two surrogate, fixed-point iteration is applied to the surrogate system, with the disciplinary surrogates replacing the original expensive disciplines to obtain an equilibrium state.
Both classes of surrogates have been shown to be more efficient than black-box approaches that do not exploit the coupling structure of multidisciplinary systems. 
Of the two, class-two surrogates are comparatively more efficient due to the poor scaling of surrogate models with dimensionality, as each disciplinary surrogate is of lower order than an equivalent class-one surrogate~\cite{jakeman2022adaptive}.
However, system-level outputs of class-two surrogates may be more sensitive to the accuracy of the individual subsystem surrogates that comprise the complete model.

In contrast to existing work, the proposed approach differs in the object being learned and in how the fixed point
is recovered.  Let \(r(\x,y)=y-f(\x,y)\) denote the multidisciplinary consistency
residual, which must be driven to $0$ at the fixed point.
Instead of fitting a map from \(\x\) to the system-level output quantities of interest, or
replacing each discipline by a separate surrogate and then coupling those surrogate
components at every new design point, we place a joint multitask GP~\cite{Bonilla2007multitask} prior on the
vector residual \(r\) itself.  The learned model \(\widehat{r}(\x,y)\) shares
information across all residual components through inter-task covariance and
identifies the fixed point as a root,
\[
    \widehat{r}(\x,y^\star)=0 .
\]
Thus, the surrogate represents the implicit equilibrium manifold
\(\{(\x,y):r(\x,y)=0\}\) across the design space. A graphical illustration is provided in \Cref{fig:coupled-decoupled-models}.  This is distinct from classical
fixed-point-iteration-based MDA, which resolves the coupled equations afresh at each design. We also differ from existing surrogate-based approaches that either 
approximate and recouple the discipline maps (EGMDO/DPOD+I) or directly approximate the converged state $y^\star$, by directly learning the joint residual map whose
zeros define multidisciplinary consistency.
The benefit of learning a surrogate model over the implicit equilibrium manifold is that, once learned, fixed points for new design variables $\x$ can be cheaply predicted without restarting a fixed-point iteration.
\begin{figure}[htb!]
    \centering
    \begin{tikzpicture}[
        node distance=1.8cm,
        block/.style={rectangle, draw, minimum width=1.0cm, minimum height=0.8cm, fill=white},
        thickarrow/.style={->, thick},
        thickarrow_rev/.style={<-, thick}
    ]

        \node[block] (f1) at (0,0) {$f_1$};
        \node[block, below right=0.2cm of f1] (f2) {$f_2$};
        \node[block, below right=0.5cm of f2] (fm) {$f_m$};

        \draw[thickarrow] (f1.east) -- +(0.3,0) node[above] {$y_1$} -| (f2.north);
        \draw[thickarrow] (f2.east) -- +(0.3,0) node[above] {$y_2$} -| (fm.north);
        \draw[thickarrow] (f1.east) -| (fm.north);
        \draw[thickarrow] (f2.west) -- +(-0.3,0) node[above] {$y_2$} -| (f1.south);
        \draw[thickarrow] (fm.west) -- +(-0.3,0) node[above] {$y_m$} -| (f2.south);
        \draw[thickarrow] (fm.west)  -| (f1.south);

        \draw[dotted, very thick] ($(f2.south east)!0.1!(fm.north west)$) -- ($(f2.south east)!0.9!(fm.north west)$);

        \node[below=1.3cm of f2] (fpi) {Fixed-Point Iteration};

        \begin{pgfonlayer}{bg}
            \node[fit=(f1)(f2)(fm)(fpi), fill=magenta!20, inner sep=0.2cm, draw] (coupled system box) {};
        \end{pgfonlayer}
        \node[above=0cm of coupled system box] (coupled system box label) {\textbf{Coupled Model}};

        \node[left=0.4cm of coupled system box] (Xn) {$\mathbf{x}$};
        \draw[thickarrow] (Xn.east) -- (coupled system box.west);

        \node[below=0.4cm of coupled system box] (yn) {$\{y_{i}^*\}_{i=1}^m$};
        \draw[thickarrow] (coupled system box.south) -- (yn.north);
        
        \begin{pgfonlayer}{base}
            \node[fit=(Xn)(yn)(coupled system box)(coupled system box label), fill=black!10, minimum height=6.2cm, draw] (coupled system canvas) {};
        \end{pgfonlayer}

        \node[block, minimum width=5.2cm] at (10.5, -0.354) (r1) {$r_1=y_1-f_1(\mathbf{x}, \{y_{j}\}_{j\neq 1})$};
        \node[block, below=0.15cm of r1, minimum width=5.2cm] (r2) {$r_2=y_2-f_2(\mathbf{x}, \{y_{j}\}_{j\neq 2})$};
        \node[block, below=0.35cm of r2, minimum width=5.2cm] (rm) {$r_m=y_m-f_m(\mathbf{x}, \{y_{j}\}_{j\neq m})$};

        \draw[dotted, very thick] ($(r2.south)!0.1!(rm.north)$) -- ($(r2.south)!0.9!(rm.north)$);

        \begin{pgfonlayer}{bg}
            \node[fit=(r1)(rm), fill=cyan!20, inner sep=0.2cm, draw] (decoupled system box) {};
        \end{pgfonlayer}
        \node[above=0cm of decoupled system box] (decoupled system box label) {\textbf{Decoupled Model, Residual Form}};

        \node[left=1cm of r2] (Zn) {$\mathbf{x}, \{ y_i \}_{i=1}^m$};
        \coordinate[right=0.3cm of Zn.east] (arrow root);
        \draw[thickarrow] (arrow root) |- (r1.west);
        \draw[thickarrow] (Zn.east) -- +(-0.1,0) -- (r2.west);
        \draw[thickarrow] (arrow root) |- (rm.west);

        \node[below=0.4cm of decoupled system box] (rn) {$\{r_i\}_{i=1}^m$};
        \draw[thickarrow] (decoupled system box.south) -- (rn.north);

       \begin{pgfonlayer}{base}
            \node[fit=(Zn)(rn)(decoupled system box)(decoupled system box label), fill=black!10, minimum height=6.2cm, draw] (decoupled system canvas) {};
        \end{pgfonlayer}

        \draw[<->, double, thick] ($(coupled system canvas.east)!0.1!(decoupled system canvas.west)$) -- ($(coupled system canvas.east)!0.9!(decoupled system canvas.west)$);

        \coordinate (connector) at ($(coupled system canvas.east)!0.5!(decoupled system canvas.west)$);
        
        \node[block, above=3.3cm of connector, align=center, fill=yellow!20, draw] (eq state) {Equilibrium State $y_i^*$ satisfies: \\ $r_i(\mathbf{x},\{y_j^*\})=y^*_i-f_i(\mathbf{x},\{y^*_j\}_{j\neq i})=\mathbf{0} \quad \forall i,j \in [m]$};

        \draw[dotted, thick] (eq state.south) -- (connector);
    \end{tikzpicture}
    \caption{Conversion of coupled model to system of decoupled residuals.}
    \label{fig:coupled-decoupled-models}
\end{figure}

It should be noted that, conceptually, there are parallels to our approach with the broader field of contour location (CL), e.g. in reliability analysis~\cite{renganathan2023camera,renganathan2024efficient,booth2024actively} and failure probability estimation~\cite{booth2025contour,booth2025failure,ranjan2008contour,marques2018contourentropy,cole2023entropycontour,renganathan2026surrogate}.
The goal of CL is to learn the contour:
\[
    S(t)=\{\mathbf{x} \in \mathcal{X} \; | \; f(\mathbf{x})=t\}
\]
for some constant $t$ representing a failure threshold of the function $f(\x)$.
Replacing $f$ with $R$ and setting $t=0$ recovers the residual-based approach described above.
CL has benefited from a GP-like probabilistic surrogate model combined with Bayesian decision-theoretic active learning through a so-called ``acquisition function''. A notable classes of acquisition functions are based on expected improvement~\cite{jones1998efficient}, information entropy~\cite{Hennig2011,cole2023entropycontour,marques2018contourentropy,chevalier2014uncertainty,swersky2013multitask}, Thompson sampling~\cite{renganathan2025qpots, carlson2025multiobjective}, and knowledge gradient~\cite{renganathan2021lookahead, frazier2008knowledge}.
This acquisition strategy generally aims to reduce some measure of posterior uncertainty at a candidate observation site; our approach builds on this idea within the context of MDO and MDA.

We propose REMAL: residual equilibrium manifold active learning, a surrogate-based framework for multidisciplinary design analysis that learns the zero-residual manifold of a coupled system rather than the converged state map or the individual disciplinary solvers. By evaluating disciplines at prescribed design and coupling-variable inputs, the method generates training data without performing fixed-point iteration at the training points. A multitask Gaussian process then models the coupled residual components jointly, entropy-based active learning concentrates new evaluations near uncertain zero-contour regions, and equilibrium states at new design points are recovered by solving a least-residual problem on the trained surrogate. The specific contributions are:
\begin{itemize}
    \item Formulation of classical multidisciplinary design analysis as a surrogate modeling problem over the residual equilibrium manifold \( \{(\mathbf{x},\mathbf{y}):R(\mathbf{x},\mathbf{y})=\mathbf{0}\} \) -- the result is a model that enables fixed-point prediction without retraining or re-evaluating disciplines at each new design point.
    \item Construction of residual training data from decoupled disciplinary evaluations, avoiding fixed-point iteration during data generation while preserving the consistency equations that define the coupled system.
    \item A theoretical result showing that, as long as the residual grows linearly away from a local equilibrium point, the error in the surrogate-based prediction is bounded.
    \item Development of an entropy-based active learning strategy that exploits inter-residual correlations learned by a multitask GP surrogate to target uncertain zero-residual contours, while naturally balancing exploration and exploitation.
    \item Demonstration of the methodology on four coupled-system benchmarks, including feedback-coupled and feed-forward cases.
    \item A software implementation of REMAL is publicly available at \url{https://github.com/csdlpsu/surrogatemdo}.
\end{itemize}

The rest of this paper is organized as follows. We review the mathematical background in~\Cref{section:background} and present the methodology in \Cref{section:method}. The experiments are discussed in \Cref{section:numerical-experiments}, followed by concluding remarks in \Cref{section:conclusion}.

\section{Background}
\label{section:background}

\subsection{Solving for fixed points in coupled systems}
We denote by $\mathbf{y}$ the vector of all coupling variables $\mathbf{y}=\left[ y_1^T, y_2^T, \dots, y_m^T \right]^T$.
For a given coupling variable $y_i=f(\mathbf{x},\mathbf{y}_{-i})$, the corresponding coupling residual is $r_i=y_i-f(\mathbf{x},\mathbf{y}_{-i})$.
By definition, the system is solved when the residuals of all coupling variables are driven to zero~\cite{mdobook}.
Among the most common iterative methods for solving coupled models are fixed-point iteration algorithms, such as Newton's method and the Gauss-Seidel method.
These methods are included as standard nonlinear solvers in software packages such as OpenMDAO \cite{Gray2019} for the analysis and optimization of multidisciplinary systems.

The use of fixed-point iteration contributes significantly to the computational cost of analyzing and designing coupled systems. Moreover, certain fixed-point iteration methods such as Newton's method require the computation of coupled derivatives at each iteration;
for models with black-box components for which gradient information is not easily accessible, further evaluations may be required to approximate these gradients.
If any components of the coupled model are expensive to evaluate, such as those using high-fidelity computational fluid dynamics (CFD) or finite-element method (FEM) simulations, the computational cost of iteration can become prohibitively high. 

\subsection{Gaussian process models}
\label{sec:gp}
Gaussian process (GP) regression provides a probabilistic surrogate for an
unknown scalar function~\cite{rasmussen,gramacy2020surrogates}. In the context of this work, the GP is applied to the coupling
residuals rather than directly to the disciplinary maps. Let
\[
    \mathbf{u} = \left(\mathbf{x},\{y_j\}_{j=1}^m\right)
    \in
    \mathcal{X}\times \prod_{j=1}^m \mathcal{Y}_j
\]
denote the augmented input containing both design variables and candidate
coupling variables. For the \(i\)-th discipline, we define the scalar residual
\[
    r_i(\mathbf{u})
    =
    y_i - f_i\!\left(\mathbf{x},\{y_j\}_{j\neq i}\right),
    \qquad i\in[m],
\]
so that a consistent multidisciplinary state satisfies
\(r_i(\mathbf{x},\{y_j^*\}_{j=1}^m)=0\) for all \(i\in[m]\).
An independent GP model places a separate prior on each residual,
\[
    r_i(\mathbf{u})
    \sim
    \mathcal{GP}\!\left(
        \mu_i(\mathbf{u}),
        k_i(\mathbf{u},\mathbf{u}';\boldsymbol{\theta}_i)
    \right),
\]
where \(\mu_i\) is the mean function and \(k_i\) is a covariance kernel with
hyperparameters \(\boldsymbol{\theta}_i\). Given residual observations
\(\mathbf{r}_i=[r_{i,1},\ldots,r_{i,N}]^T\) at augmented inputs
\(\mathbf{U}=[\mathbf{u}_1,\ldots,\mathbf{u}_N]^T\), the training covariance is
\[
    \mathbf{K}_{y,i}
    =
    \mathbf{K}_i + \sigma_{n,i}^2\mathbf{I},
    \qquad
    (\mathbf{K}_i)_{nn'}
    =
    k_i(\mathbf{u}_n,\mathbf{u}_{n'};\boldsymbol{\theta}_i),
\]
where \(\sigma_{n,i}^2\) is an observation-noise or nugget variance. For a new
point \(\mathbf{u}_*\), the posterior distribution of \(r_i(\mathbf{u}_*)\) is
Gaussian, with mean and variance
\[
    \widehat{\mu}_i(\mathbf{u}_*)
    =
    \mu_i(\mathbf{u}_*)
    +
    \mathbf{k}_{i,*}^T
    \mathbf{K}_{y,i}^{-1}
    (\mathbf{r}_i-\boldsymbol{\mu}_i),
    \qquad
    \widehat{\sigma}_i^2(\mathbf{u}_*)
    =
    k_i(\mathbf{u}_*,\mathbf{u}_*)
    -
    \mathbf{k}_{i,*}^T
    \mathbf{K}_{y,i}^{-1}
    \mathbf{k}_{i,*}.
\]
The kernel lengthscales, process variance, mean parameters, and nugget are
typically inferred by maximizing the marginal likelihood~\cite{rasmussen}. Although this
independent construction is straightforward, it treats each residual separately
and therefore cannot exploit correlations between residual components. To address this, we propose multitask GP models, which are discussed next.

\subsection{Multitask Gaussian process models}

A multitask Gaussian process (MTGP)~\cite{Bonilla2007multitask} model extends the GP model by placing a joint prior
on multiple correlated outputs. Here, the tasks are the residual components of
the coupled multidisciplinary system. Defining
\[
    \mathbf{r}(\mathbf{u})
    =
    \left[
        r_1(\mathbf{u}),\ldots,r_m(\mathbf{u})
    \right]^T
    =
    R(\mathbf{x},\mathbf{y}),
\]
the MTGP models all residuals simultaneously over the same augmented input
\(\mathbf{u}=(\mathbf{x},\{y_j\}_{j=1}^m)\). This allows observations of one
residual component to inform predictions of another whenever the data indicate
nonzero cross-residual dependence.

We use an intrinsic coregionalization model (ICM) covariance structure for the
MTGP~\citep{Bonilla2007multitask,alvarez2012kernels}. Under this model,
\[
    \mathbf{r}(\mathbf{u})
    \sim
    \mathcal{MTGP}
    \left(
        \boldsymbol{\mu}(\mathbf{u}),
        \mathbf{K}\big((\mathbf{u},i),(\mathbf{u}',j)\big)
    \right),
\]
with covariance
\[
    \operatorname{cov}
    \left[
        r_i(\mathbf{u}),r_j(\mathbf{u}')
    \right]
    =
    B_{ij}
    k_{\mathcal{U}}
    (\mathbf{u},\mathbf{u}';\boldsymbol{\theta}_{\mathcal{U}}),
    \qquad i,j\in[m].
\]
Here, \(k_{\mathcal{U}}\) is a Mat\'ern kernel over the augmented input space and
\(\mathbf{B}\in\mathbb{R}^{m\times m}\) is a positive semidefinite
coregionalization matrix. The input kernel controls correlation between residual
evaluations at different design--coupling states, while \(\mathbf{B}\) controls
correlation between residual tasks. Its diagonal entries describe task-specific
residual variances, and its off-diagonal entries encode cross-residual
covariances. The independent GP model is recovered as the special case in which
\(\mathbf{B}\) is diagonal.

For complete observations of all residuals at \(N\) augmented inputs, stack the
training data as
\[
    \mathbf{r}
    =
    \left[
        \mathbf{r}(\mathbf{u}_1)^T,
        \ldots,
        \mathbf{r}(\mathbf{u}_N)^T
    \right]^T
    \in\mathbb{R}^{Nm}.
\]
The ICM covariance matrix can then be written compactly as
\[
    \mathbf{K}_{y}
    =
    \mathbf{K}_{\mathcal{U}}\otimes \mathbf{B}
    +
    \mathbf{I}_{N}\otimes\boldsymbol{\Sigma}_{\epsilon},
\]
where
\((\mathbf{K}_{\mathcal{U}})_{nn'}
=
k_{\mathcal{U}}(\mathbf{u}_n,\mathbf{u}_{n'};\boldsymbol{\theta}_{\mathcal{U}})\),
\(\otimes\) denotes the Kronecker product, and
\(\boldsymbol{\Sigma}_{\epsilon}\) contains task-specific nugget or noise
variances. The MTGP hyperparameters are
\[
    \boldsymbol{\theta}
    =
    \left\{
        \boldsymbol{\theta}_{\mathcal{U}},
        \mathbf{B},
        \boldsymbol{\Sigma}_{\epsilon},
        \text{mean-function parameters}
    \right\}.
\]
These are inferred by maximizing the joint log marginal likelihood
\[
    \log p(\mathbf{r}\mid\mathbf{U},\boldsymbol{\theta})
    =
    -\frac{1}{2}
    (\mathbf{r}-\boldsymbol{\mu})^T
    \mathbf{K}_{y}^{-1}
    (\mathbf{r}-\boldsymbol{\mu})
    -
    \frac{1}{2}\log|\mathbf{K}_{y}|
    -
    \frac{Nm}{2}\log(2\pi).
\]
The positive-semidefinite constraint on \(\mathbf{B}\) is commonly enforced by a
factorization such as \(\mathbf{B}=\mathbf{L}\mathbf{L}^T\), possibly with a
diagonal correction.
For a new augmented input \(\mathbf{u}_*\), the MTGP posterior over the full
residual vector is multivariate Gaussian,
\[
    \mathbf{r}(\mathbf{u}_*)\mid\mathcal{D}
    \sim
    \mathcal{N}
    \left(
        \widehat{\boldsymbol{\mu}}(\mathbf{u}_*),
        \widehat{\boldsymbol{\Sigma}}(\mathbf{u}_*)
    \right).
\]
The posterior mean gives the surrogate residual equations used to approximate
the multidisciplinary fixed point,
\[
    \widehat{\boldsymbol{\mu}}
    \left(
        \mathbf{x},\{y_j^*\}_{j=1}^m
    \right)
    =
    \mathbf{0},
\]
while the posterior covariance quantifies uncertainty in the residual vector and
retains the learned cross-residual dependencies. Thus, unlike independent GP
models, the MTGP represents the residual equilibrium manifold jointly across all
coupling residuals.

\section{Methodology}
\label{section:method}
\begin{figure}[htb!]
\centering\begin{tikzpicture}[node distance=1.8cm,block/.style={rectangle, draw, minimum width=1.0cm, minimum height=0.8cm, fill=white},thickarrow/.style={->, very thick},thickarrow_rev/.style={<-, very thick}]

\node[double copy shadow, fill=white, draw] (model) at (0,0) {$r_i \left(\mathbf{x},\{y_j\}_{j=1}^m \right)$};
\node[above=0.2cm of model, align=center] (model label) {\scriptsize Decoupled Model};
\begin{pgfonlayer}{bg}
    \node[fit=(model)(model label), fill=cyan!20, inner sep = 0.3cm, draw] (model box) {};
\end{pgfonlayer}

\node[block, above=1.4cm of model, align=center] (Zn0) 
{
    $X_{n_0}$\\
    $\left\{ {Y}_{i,n_0} \right\}_{i=1}^m$
};
\node[right=0cm of Zn0, align=center] {\textbf{Initial}\\\textbf{Training}};
\draw[thickarrow] (Zn0.south) -- (model box.north);

\node[block, fill=yellow!70, right=2.5cm of model, align=center] (gp) {Train MTGP\\Surrogate $\mathcal{S}_n$};
\draw[thickarrow] (model.east) -- 
    +(0.5,0) node[anchor=south west, align=center] (decoupled model output) {$X_n$\\$\{Y_{i,n}\}_{i=1}^m$} --
    (gp.west);
\node[below=0cm of decoupled model output, align=center] {$\{R_{i,n}\}_{i=1}^m$};
\node[right=0cm of gp.east, anchor=south west] {$\mathcal{S}_n$};

\node[block, below right=0.9cm and 0.5cm of gp.south east, anchor=north east, align=center, label=above:Active Learning] (entropy) {
    Acquire new $\mathbf{x}_{n+1}, \left\{ {y}_{i, n+1} \right\}_{i=1}^m $\\
    using entropy:\\[0.4cm]
    $
        \begin{aligned}
            \underset
                {
                    \mathbf{x} \in \mathcal{X},~ y_i \in \mathcal{Y},~\forall i \in [m]
                }
                {
                    \mathrm{argmax}
                }
            \: H\left((\mathbf{x}, \{y_i\}_{i=1}^m) \; | \; \mathcal{S}_n\right) \\
        \end{aligned}
    $
};
\draw[thickarrow] (entropy.west) -| (model box.south);
\node[below right=0cm and 0.1cm of entropy.west, anchor=north east, align=center] (out label) 
{
    $X_{n+1}$\\
    $\left\{ {Y}_{i, n+1} \right\}_{i=1}^m$\\
};
\node[above left=1.0cm and 0.3cm of out label.north, anchor=center] {\scriptsize $n \gets n+1$};

\coordinate[right=1cm of gp] (gp else);

\begin{pgfonlayer}{base}
    \node[fit=(model box)(entropy)(gp else)(Zn0), fill=black!9, draw] (training canvas) {};
    \node[anchor=north east, inner sep=0.2cm] at (training canvas.north east) (training label) {\large \textbf{TRAINING}};
\end{pgfonlayer}

\node[block, above=0.2cm of training canvas.north west, anchor=south west, align=center] (notation) 
{
$
    X_n = 
    \begin{bmatrix}  
        \mathbf{x}_1^T \\
        \mathbf{x}_2^T \\
        \vdots \\
        \mathbf{x}_n^T
    \end{bmatrix}
    \quad
$
$    
    Y_{i, n} = 
    \begin{bmatrix}  
        y_{i,1}^T \\
        y_{i,2}^T \\
        \vdots \\
        y_{i,n}^T
    \end{bmatrix}
    \quad
$
$
    R_{i, n} = 
    \begin{bmatrix}  
        r_{i,1}^T \\
        r_{i,2}^T \\
        \vdots \\
        r_{i,n}^T
    \end{bmatrix}
$};

\node[block, above right=1.5cm and 1cm of entropy.south east, anchor=south west, align=center] (sse) 
{
    Minimize SSE: \\[0.2cm]
    $
        \begin{aligned}
            \underset
                {
                    y_i \in \mathcal{Y},~\forall i \in [m]
                }
                {
                    \mathrm{argmin}
                }
            \frac{1}{2} \sum_{i=1}^{m} \hat{r}_i^2(\{y_i\}_{i=1}^m; \mathbf{x}) \\
        \end{aligned}
    $
};
\draw[thickarrow] (gp.east) -- +(2.7,0) node[above] {\scriptsize if end condition} -| (sse.north);
\draw[thickarrow] (gp else) -- +(0,-0.4) node[left] {\scriptsize else} |- (entropy.east);

\node[block, above=1cm of $(sse.north)!0.5!(sse.north east)$, label=above:Input] (xdesign) {$\mathbf{x}$};
\draw[thickarrow] (xdesign.south) -- ($(sse.north)!0.5!(sse.north east)$);

\coordinate[below right=0.15cm and 0.7cm of training label.north east] (prediction corner);

\node[block, below=0.49cm of $(sse.south)!0.5!(sse.south east)$, anchor=north] (output) {$\{y^*_i\}_{i=1}^m$};
\coordinate[below=0.2cm of output] (prediction bottom);
\draw[thickarrow] ($(sse.south)!0.5!(sse.south east)$) -- (output.north);
\node[left=0cm of output, align=center] {\textbf{Predicted}\\\textbf{Fixed Point}};

\begin{pgfonlayer}{base}
    \node[fit=(sse)(xdesign)(prediction bottom)(prediction corner), fill=black!9, draw] (prediction canvas) {};
    \node[anchor=north east, inner sep=0.2cm] at (prediction canvas.north east) (prediction label) {\large \textbf{PREDICTION}};
\end{pgfonlayer}


\coordinate[right=0.2cm of notation.north east] (ul);
\coordinate[above=0.2cm of prediction canvas.north east] (lr);
\node[block, fit=(ul)(lr), inner sep=0cm, align=center, fill=yellow!20] (openmdao2) 
    {
        \vspace*{0cm}\\
        Prediction validated\\
        against ``truth'' from\\
        \includegraphics[width=4cm]{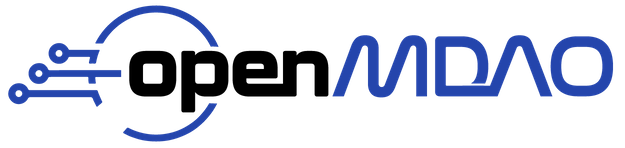}
    };

\end{tikzpicture}
\caption{Overview of residual surrogate method.}
\label{fig:block-diagram}

\end{figure}
The objective of REMAL is to approximate the equilibrium state of a
coupled multidisciplinary system without repeatedly applying a fixed-point
solver to the original expensive components.  Rather than learning the
discipline maps \(f_i\) directly, or learning a map from design variables to
converged coupling variables, we learn the residual operator introduced in
Section~\ref{section:background}.  The trained surrogate is then used to recover
an equilibrium state by finding a point at which all predicted residuals vanish.

Denote $\mathcal{Y} = \prod_{i=1}^m \mathcal{Y}_i$ and $\mathcal{U} = \mathcal{X}\times\mathcal{Y}$.
For compactness, the notation below follows the scalar residual-task notation of
Section~\ref{section:background}.  If some \(y_i\) is vector-valued, the same
construction applies componentwise after stacking the corresponding scalar
residual entries.  For each discipline,
\begin{equation*}
    r_i(\mathbf{u})
    =
    y_i
    -
    f_i\!\left(
        \mathbf{x},
        \{y_j\}_{j\neq i}
    \right),
    \qquad i\in[m],
    \label{eq:method_residual_i}
\end{equation*}
and the residual vector is
\begin{equation*}
    R(\mathbf{u})
    =
    \mathbf{r}(\mathbf{u})
    =
    \left[
        r_1(\mathbf{u}),\ldots,r_m(\mathbf{u})
    \right]^T .
    \label{eq:method_residual_vector}
\end{equation*}
For a prescribed design point \(\mathbf{x}\), an equilibrium state
\(\mathbf{y}^\star(\mathbf{x})\) satisfies
$R\!\left(\mathbf{x},\mathbf{y}^\star(\mathbf{x})\right)
    =
    \mathbf{0}.$
Thus, the residual surrogate represents the implicit equilibrium manifold
\[
    \mathcal{M}
    =
    \left\{
        (\mathbf{x},\mathbf{y})\in\mathcal{U}
        :
        R(\mathbf{x},\mathbf{y})=\mathbf{0}
    \right\}.
\]
REMAL consists of four steps: construction of residual observations,
training of a multitask Gaussian process surrogate, entropy-based active
learning, and fixed-point recovery from the trained surrogate.

\subsection{Residual data generation}
\label{subsection:residual_data_generation}

Training data are generated by evaluating the disciplines in residual form.  At a
candidate augmented input
\(\mathbf{u}_n=(\mathbf{x}_n,\mathbf{y}_n)\), the coupling variables
\(\mathbf{y}_n\) are supplied as independent inputs to each discipline rather
than being obtained through fixed-point iteration.  The \(i\)-th residual
observation is therefore
\begin{equation}
    r_i(\mathbf{u}_n)
    =
    y_{i,n}
    -
    f_i\!\left(
        \mathbf{x}_n,
        \{y_{j,n}\}_{j\neq i}
    \right),
    \qquad i\in[m].
    \label{eq:method_residual_observation}
\end{equation}
Consequently, residual data can be collected without solving an MDA problem at
any training point.  Each observation requires only the corresponding
disciplinary evaluation needed to compute the residual in
\Cref{eq:method_residual_observation}.
Because the implementation uses task-indexed multitask regression, it is useful
to represent the training set as
\begin{equation}
    \mathcal{D}
    =
    \left\{
        \left(
            (\mathbf{u}_{\ell},\tau_{\ell}),
            r_{\tau_{\ell}}(\mathbf{u}_{\ell})
        \right)
    \right\}_{\ell=1}^{N_{\mathcal{D}}},
    \qquad
    \tau_{\ell}\in[m],
    \label{eq:method_task_indexed_dataset}
\end{equation}
where \(\tau_{\ell}\) is the residual-task index.  When all residuals are
observed at the same \(N\) augmented inputs, this representation is equivalent
to the stacked \(\mathbf{U}\) and \(\mathbf{r}\) notation used in
Section~\ref{section:background}; it simply expands each augmented input into
\(m\) task-indexed observations.

\subsection{Multitask residual surrogate}
\label{subsection:method_mtgp_surrogate}

The surrogate model is the MTGP residual model described in
Section~\ref{section:background}.  Conditioned on the current training set
\(\mathcal{D}\), the surrogate posterior is written as
\begin{equation}
    R(\mathbf{u})\mid \mathcal{D}
    \sim
    \mathcal{N}
    \left(
        \widehat{\boldsymbol{\mu}}_{\mathcal{D}}(\mathbf{u}),
        \widehat{\boldsymbol{\Sigma}}_{\mathcal{D}}(\mathbf{u})
    \right),
    \label{eq:method_mtgp_posterior}
\end{equation}
where $ \widehat{\boldsymbol{\mu}}_{\mathcal{D}}(\mathbf{u})
    =
    \left[
        \widehat{\mu}_{1,\mathcal{D}}(\mathbf{u}),\ldots,
        \widehat{\mu}_{m,\mathcal{D}}(\mathbf{u})
    \right]^T .$
The posterior mean provides the surrogate residual equations, and the posterior
covariance quantifies uncertainty in those residuals.

For task-indexed observations, the ICM covariance from
Section~\ref{section:background} can be written entrywise as
\begin{equation}
    k_{\theta}
    \left(
        (\mathbf{u},i),
        (\mathbf{u}',j)
    \right)
    =
    k_{\mathcal{U}}
    \left(
        \mathbf{u},\mathbf{u}';
        \boldsymbol{\theta}_{\mathcal{U}}
    \right)
    B_{ij},
    \qquad i,j\in[m],
    \label{eq:method_task_indexed_kernel}
\end{equation}
where \(k_{\mathcal{U}}\) is the input-space kernel and
\(\mathbf{B}\in\mathbb{R}^{m\times m}\) is the coregionalization matrix.  This representation is more
flexible than the full Kronecker multitask representation
\cite{Bonilla2007multitask} because it does not require every task to be
observed at every augmented input.
The MTGP hyperparameters are obtained by maximizing the exact marginal log
likelihood~\cite{rasmussen}.  
The residual observations are treated as deterministic, and a small ``nugget'' term is added to the covariance matrix
for numerical conditioning.

Before model fitting, all continuous components of \(\mathbf{u}\) are normalized
to the unit hypercube using the prescribed bounds on \(\mathcal{X}\) and
\(\mathcal{Y}\).  The task index is not normalized as a continuous coordinate;
it is used only by the task covariance.  Residual observations are standardized
before training.  All zero-residual calculations below are therefore performed
after transforming posterior means and variances back to physical residual
units, or equivalently after transforming the zero threshold into standardized
coordinates.

A practical requirement of the method is that the coupling-variable bounds used
to define \(\mathcal{Y}\) contain the equilibrium states of interest.  If these
bounds are too narrow or poorly centered, the surrogate must extrapolate when
fixed points lie outside the training domain.  Adaptive bound refinement
strategies, such as updating coupling-variable bounds from observed component
output ranges~\cite{jakeman2022adaptive}, could be incorporated into the
present framework.  In the numerical studies considered here, the bounds are set
\textit{a priori} from analytic or problem-specific information.

\subsubsection*{Initial training design}
\label{subsection:method_initial_training}

The initial training set is generated using a Latin hypercube sample
(LHS)~\cite{mckay1979lhs} over the augmented input domain \(\mathcal{U}\).  Let
\[
    \left\{
        \mathbf{u}^{(0)}_n
    \right\}_{n=1}^{N_0}
    \subset
    \mathcal{U}
\]
denote the initial LHS points.  All residual tasks are evaluated at each seed
point, giving
\begin{equation}
    \mathcal{D}_0
    =
    \left\{
        \left(
            (\mathbf{u}^{(0)}_n,i),
            r_i(\mathbf{u}^{(0)}_n)
        \right)
        :
        n=1,\ldots,N_0,\;
        i\in[m]
    \right\}.
    \label{eq:method_initial_dataset}
\end{equation}
The MTGP trained on \(\mathcal{D}_0\) provides the first posterior estimate of
the residual equilibrium manifold.  This initial design is intentionally
space-filling rather than contour-focused; subsequent samples are selected
adaptively to improve the surrogate near the zero-residual manifold.

\subsection{Entropy-based active learning}
\label{subsection:method_active_learning}

After the initial surrogate is trained, additional residual observations are
selected using an entropy-based acquisition function.  The acquisition targets
locations where the surrogate is most uncertain about whether a residual lies
above or below the zero contour.  

At any given $\mathbf{u}$, the residual $r_i(\mathbf{u})$ can be classified into one of two states: $r_i \leq 0$ or $r_i > 0$. Since the residuals are jointly learned as an MTGP and their marginals are GPs, the posterior probability that the task-\(i\) residual is below the
zero threshold is
\begin{equation}
    p_{i,\mathcal{D}}(\mathbf{u})
    =
    \mathbb{P}
    \left[
        r_i(\mathbf{u}) \leq 0
        \mid
        \mathcal{D}
    \right]
    =
    \Phi
    \left(
        -
        \frac{
            \widehat{\mu}_{i,\mathcal{D}}(\mathbf{u})
        }{
            \widehat{\sigma}_{i,\mathcal{D}}(\mathbf{u})
        }
    \right),
    \label{eq:method_zero_probability}
\end{equation}
where \(\Phi(\cdot)\) is the standard Gaussian CDF and
\(\widehat{\sigma}_{i,\mathcal{D}}^2(\mathbf{u})\) is the \(i\)-th marginal
posterior variance. The closed form of the above equation follows from the Gaussian marginal posterior of $r_i$. Treating the sign of $r_i$ as a Bernoulli random variable, the corresponding binary entropy is
\begin{equation}
    h_{i,\mathcal{D}}(\mathbf{u})
    =
    -
    p_{i,\mathcal{D}}(\mathbf{u})
    \log
    p_{i,\mathcal{D}}(\mathbf{u})
    -
    \left(
        1-p_{i,\mathcal{D}}(\mathbf{u})
    \right)
    \log
    \left(
        1-p_{i,\mathcal{D}}(\mathbf{u})
    \right).
    \label{eq:method_entropy}
\end{equation}
In practice, \(p_{i,\mathcal{D}}\) is clipped away from 0 and 1 to avoid
numerical issues in the logarithms.
At each active-learning iteration, a new candidate point is selected for each
task by solving
\begin{equation}
    \widetilde{\mathbf{u}}_i
    \in
    \operatorname*{arg\,max}_{\mathbf{u}\in\mathcal{U}}
    h_{i,\mathcal{D}}(\mathbf{u}),
    \qquad i\in[m].
    \label{eq:method_taskwise_acquisition}
\end{equation}
The maximization is inexpensive because it depends only on the MTGP posterior.
In the numerical implementation, \Cref{eq:method_taskwise_acquisition} is
solved with the L-BFGS-B algorithm~\cite{nocedal2006numerical} under the
normalized augmented-input bounds.  The selected observations are then appended
to the training set,
\begin{equation*}
    \mathcal{D}
    \leftarrow
    \mathcal{D}
    \cup
    \left\{
        \left(
            (\widetilde{\mathbf{u}}_i,i),
            r_i(\widetilde{\mathbf{u}}_i)
        \right)
        :
        i\in[m]
    \right\},
    \label{eq:method_dataset_update}
\end{equation*}
and the MTGP is retrained.  If hyperparameter optimization terminates
abnormally, the previous hyperparameters are retained and the model is
conditioned on the expanded dataset.  This stabilizes the active-learning loop
while still incorporating the new residual observations.
The active-learning loop is repeated until a prescribed evaluation budget is
reached or until the downstream fixed-point predictions satisfy the desired
accuracy criteria.  The latter criterion is assessed using the residual norm of
the surrogate prediction, as described next.

\subsection{Fixed-point prediction from the surrogate}
\label{subsection:method_fixed_point_prediction}

Once the residual surrogate has been trained, equilibrium states for new design
points are recovered without further evaluations of the expensive coupled
system.  For a prescribed design point \(\mathbf{x}_0\), the zero-residual
level set is approximated by
solving a nonlinear least squares problem using the MTGP posterior mean:
\begin{equation}
    \widehat{\mathbf{y}}^\star_{\mathcal{D}}(\mathbf{x}_0)
    \in
    \operatorname*{arg\,min}_{\mathbf{y}}
    J_{\mathcal{D}}(\mathbf{y};\mathbf{x}_0),
    \qquad
    J_{\mathcal{D}}(\mathbf{y};\mathbf{x}_0)
    =
    \frac{1}{2}
    \left\|
        \widehat{\boldsymbol{\mu}}_{\mathcal{D}}
        \left(
            \mathbf{x}_0,\mathbf{y}
        \right)
    \right\|_2^2 .
    \label{eq:method_fixed_point_objective}
\end{equation}
Equivalently, \Cref{eq:method_fixed_point_objective} seeks the intersection
of the predicted zero-level sets
\begin{equation}
    \widehat{S}_{i,\mathcal{D}}(0;\mathbf{x}_0)
    =
    \left\{
        \mathbf{y}
        :
        \widehat{\mu}_{i,\mathcal{D}}
        \left(
            \mathbf{x}_0,\mathbf{y}
        \right)
        =
        0
    \right\},
    \qquad i\in[m].
    \label{eq:method_predicted_zero_sets}
\end{equation}
A candidate solution is accepted as an approximate equilibrium state if
\begin{equation}
    \left\|
        \widehat{\boldsymbol{\mu}}_{\mathcal{D}}
        \left(
            \mathbf{x}_0,
            \widehat{\mathbf{y}}^\star_{\mathcal{D}}(\mathbf{x}_0)
        \right)
    \right\|_2
    \leq
    \varepsilon_R,
    \label{eq:method_acceptance_criterion}
\end{equation}
for a prescribed residual tolerance \(\varepsilon_R\).  If no candidate satisfies
\Cref{eq:method_acceptance_criterion}, the optimizer returns only a
least-residual point and the surrogate is not considered to have identified a
fixed point for that design. \Cref{prop:stability bound} shows that, under mild assumptions, the error in the predicted fixed point is bounded by a quantity involving $\varepsilon_R$.

The objective in \Cref{eq:method_fixed_point_objective} is generally
nonconvex, and multiple fixed points may exist.  Therefore, the optimization is
initialized from multiple coupling-variable guesses within a chosen prediction
search region.  In the numerical examples, the objective is minimized using the Newton conjugate-gradient trust-region method~\cite{nocedal2006numerical}.  Gradients and
Hessian-vector products are computed by automatic differentiation of
\Cref{eq:method_fixed_point_objective}.  If a
system-level output is required after the fixed point is recovered, it is
evaluated as
\[
    \widehat{z}(\mathbf{x}_0)
    =
    g
    \left(
        \mathbf{x}_0,
        \{\widehat{y}^{\star}_{i,\mathcal{D}}(\mathbf{x}_0)\}_{i=1}^m
    \right).
\]

\subsection*{Validation against classical fixed-point iteration}
\label{subsection:method_validation}

The surrogate-predicted fixed points are validated against conventional MDA
solutions computed using {OpenMDAO}~\cite{Gray2019}.  For each test design
\(\mathbf{x}_q\), OpenMDAO is used to compute a reference equilibrium state
\(\mathbf{y}^{\star}_{\mathrm{FPI}}(\mathbf{x}_q)\) using the same disciplinary
equations as the surrogate-data generator.  When applicable, the OpenMDAO
analyses use \texttt{NonlinearBlockGS} as the nonlinear fixed-point solver and
\texttt{DirectSolver} as the associated linear solver.

The fixed point prediction error is measured by the Euclidean distance between
the surrogate prediction and the reference fixed point,
\begin{equation}
    e_y(\mathbf{x}_q)
    =
    \left\|
        \widehat{\mathbf{y}}^\star_{\mathcal{D}}(\mathbf{x}_q)
        -
        \mathbf{y}^{\star}_{\mathrm{FPI}}(\mathbf{x}_q)
    \right\|_2 .
    \label{eq:method_validation_error}
\end{equation}
These OpenMDAO solutions are used only for validation in the numerical
experiments; they are not used to train the MTGP surrogate or to select
active learning samples. The overall method is summarized in \Cref{alg:residual_mtgp}.

\begin{algorithm}[t]
\caption{REMAL: Residual Equilibrium Manifold Active Learning}
\label{alg:residual_mtgp}
\begin{algorithmic}[1]
\REQUIRE Design domain $\mathcal{X}$, coupling domain $\mathcal{Y}$, residual
definitions in \Cref{eq:method_residual_i}, initial size $N_0$, active-learning
budget $N_{\mathrm{AL}}$, tolerance $\varepsilon_R$.
\ENSURE Trained residual surrogate and predicted fixed points
$\widehat{\mathbf{y}}^\star_{\mathcal{D}}(\mathbf{x})$.

\STATE Draw an initial LHS
$\{\mathbf{u}^{(0)}_n=(\mathbf{x}^{(0)}_n,\mathbf{y}^{(0)}_n)\}_{n=1}^{N_0}
\subset \mathcal{X}\times\mathcal{Y}$.
\STATE Evaluate the disciplinary residuals at the seed points using
\Cref{eq:method_residual_observation}, without performing fixed-point
iteration.
\STATE Form the task-indexed dataset $\mathcal{D}_0$ as in
\Cref{eq:method_initial_dataset}; normalize continuous inputs and
standardize residual observations.

\STATE Train the multitask GP residual surrogate
$R(\mathbf{u})|\mathcal{D}\sim
\mathcal{N}(\widehat{\boldsymbol{\mu}}_{\mathcal{D}},
\widehat{\boldsymbol{\Sigma}}_{\mathcal{D}})$ using the ICM kernel in
\Cref{eq:method_task_indexed_kernel} and the associated marginal likelihood.

\FOR{$k=1,\ldots,N_{\mathrm{AL}}$}
    \FOR{each residual task $i=1,\ldots,m$}
        \STATE Compute the zero-contour probability
        $p_{i,\mathcal{D}}(\mathbf{u})$ and entropy
        $h_{i,\mathcal{D}}(\mathbf{u})$ using
        Eqs.~\eqref{eq:method_zero_probability}--\eqref{eq:method_entropy}.
        \STATE Select
        $\widetilde{\mathbf{u}}_i
        \in \arg\max_{\mathbf{u}\in\mathcal{U}}
        h_{i,\mathcal{D}}(\mathbf{u})$ as in
        \Cref{eq:method_taskwise_acquisition}.
        \STATE Evaluate only the selected residual
        $r_i(\widetilde{\mathbf{u}}_i)$.
    \ENDFOR
    \STATE Augment $\mathcal{D}$ using \Cref{eq:method_dataset_update}
    and retrain, or recondition with previous hyperparameters if retraining fails.
\ENDFOR

\STATE For a new design $\mathbf{x}_0$, recover the equilibrium prediction by
solving the least-residual problem in
\Cref{eq:method_fixed_point_objective} from multiple initial guesses.
\STATE Accept the solution as a surrogate fixed point if the residual criterion
in \Cref{eq:method_acceptance_criterion} is satisfied; otherwise report the
least-residual point only.
\STATE If needed, evaluate the system output
$\widehat{z}(\mathbf{x}_0)=
g(\mathbf{x}_0,\{\widehat{y}^{\star}_{i,\mathcal{D}}(\mathbf{x}_0)\}_{i=1}^m)$.
\end{algorithmic}
\end{algorithm}

\subsection{Theoretical properties}

The least-squares optimization step in \Cref{eq:method_fixed_point_objective} results in a bounded prediction error on the fixed point.  The following proposition formalizes this claim.

\begin{proposition}[A local error bound]\label{prop:stability bound}
    Fix a design point \(\x_0\), and define
\[
R_{\x_0}(y) = R(\x_0,y), \qquad \widehat R_D(y)=\widehat \mu_D(\x_0,y).
\]
Let \(y^\star\) be a locally unique equilibrium satisfying
\(R_{\x_0}(y^\star)=0\). Assume that there exist constants
\(\rho>0\) and \(\alpha>0\) such that
\[
\|R_{\x_0}(y)\|_2 \ge \alpha \|y-y^\star\|_2,
\qquad
\forall y\in B_\rho(y^\star),
\]
where $B_\rho$ is a ball of radius $\rho$ centered at $y^\star$.
Let
\[
\delta_D =
\sup_{y\in B_\rho(y^\star)}
\|\widehat R_D(y)-R_{\x_0}(y)\|_2 .
\]
If the surrogate-predicted point \(\widehat y_D^\star\in B_\rho(y^\star)\)
satisfies
$\|\widehat R_D(\widehat y_D^\star)\|_2 \le \varepsilon_R,$
then
\[
\|\widehat y_D^\star-y^\star\|_2
\le
\frac{\delta_D+\varepsilon_R}{\alpha}.
\]
\end{proposition}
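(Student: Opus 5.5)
The argument is a direct combination of the local growth condition with the triangle inequality, applied at the single point \(\widehat y_D^\star\). Since \(\widehat y_D^\star\in B_\rho(y^\star)\) by hypothesis, the growth assumption applies there and gives
\[
\alpha\,\|\widehat y_D^\star-y^\star\|_2 \le \|R_{\x_0}(\widehat y_D^\star)\|_2 .
\]
The remaining task is to bound the true residual at the surrogate-predicted point by quantities controlled by the surrogate.

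For this I would write \(R_{\x_0}(\widehat y_D^\star) = \bigl(R_{\x_0}(\widehat y_D^\star)-\widehat R_D(\widehat y_D^\star)\bigr) + \widehat R_D(\widehat y_D^\star)\) and apply the triangle inequality. The first term is at most \(\delta_D\), because \(\widehat y_D^\star\) lies in the ball over which the supremum defining \(\delta_D\) is taken. The second term is at most \(\varepsilon_R\) by the acceptance criterion \Cref{eq:method_acceptance_criterion}. Hence
\[
\|R_{\x_0}(\widehat y_D^\star)\|_2\le \delta_D+\varepsilon_R .
\]
Combining this with the growth inequality and dividing by \(\alpha>0\) gives the stated bound.

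The estimate itself has no real obstacle. The only point needing care is to confirm that every quantity is evaluated inside \(B_\rho(y^\star)\): both the growth condition and the definition of \(\delta_D\) are local to that ball, and membership of \(\widehat y_D^\star\) is assumed rather than derived. I would remark after the proof that guaranteeing this membership in practice, for example through the multistart initialization or by an argument that \(\delta_D+\varepsilon_R<\alpha\rho\) forces accepted roots close to \(y^\star\), is a separate question not addressed by the proposition.

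I would also note that local uniqueness of \(y^\star\) is implied by the growth condition and is not used further. Finally, \(\delta_D\) may be infinite if the surrogate is unbounded on the ball, in which case the bound holds trivially.
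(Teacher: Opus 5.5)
Your proof is correct and follows the paper's argument exactly: apply the growth condition at $\widehat y_D^\star$, add and subtract $\widehat R_D(\widehat y_D^\star)$, bound the two pieces by $\delta_D$ and $\varepsilon_R$ (both valid because $\widehat y_D^\star\in B_\rho(y^\star)$), and divide by $\alpha$. The paper attributes the middle step to Cauchy--Schwarz, but the inequality actually used there is the triangle inequality you invoke.
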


\begin{proof}
    Using the local residual stability assumption and \(R_{\x_0}(y^\star)=0\),
\[
\alpha \|\widehat y_D^\star-y^\star\|_2
\le
\|R_{\x_0}(\widehat y_D^\star)\|_2 .
\]
By adding and subtracting the surrogate residual and applying the Cauchy--Schwarz inequality, we obtain
\[
\|R_{\x_0}(\widehat y_D^\star)\|_2
\le
\|R_{\x_0}(\widehat y_D^\star)-\widehat R_D(\widehat y_D^\star)\|_2
+
\|\widehat R_D(\widehat y_D^\star)\|_2
\le
\delta_D+\varepsilon_R .
\]
Combining the above two inequalities gives the stated result.
\end{proof}
The proposition shows that the fixed-point error is controlled by two
quantities: the residual-surrogate error \(\delta_D\) near the true equilibrium
and the residual tolerance \(\varepsilon_R\) used in the least-residual solve.
The constant \(\alpha\) measures the conditioning of the residual root. Small
\(\alpha\) corresponds to nearly tangent or weakly intersecting residual
contours, in which case a small residual error can produce a larger error in
the recovered coupling variables. This is consistent with the contour-based
interpretation used throughout the numerical examples. 

\section{Numerical experiments}
\label{section:numerical-experiments}

We evaluate the proposed residual MTGP approach on four coupled-system examples. 
The examples include two low-dimensional feedback-coupled systems, a feed-forward turbine model, and a modified turbine model with added feedback coupling. 
Together, these cases test whether the residual surrogate can learn equilibrium manifolds across different coupling structures and input dimensions.

For each experiment, an initial training set
\[
    S_0
    =
    \left\{
        (\mathbf{x}^{(j)},\mathbf{y}^{(j)})
    \right\}_{j=1}^{n_0}
    \subset \mathcal{X}\times\mathcal{Y}
\]
is generated by Latin hypercube sampling over the augmented input domain. 
All \(m\) residual components are evaluated at each seed point, giving \(n_0m\) initial residual observations. 
These observations initialize the MTGP surrogate. 
The surrogate uses the ICM multitask covariance structure~\cite{Bonilla2007multitask}. 
After initialization, the model is updated for \(n_{\mathrm{AL}}\) active-learning iterations using the entropy acquisition rule described in Section~\ref{subsection:method_active_learning}. 
At each iteration, one new residual observation is acquired for each task, so the total number of residual evaluations is
\[
    N_{\mathrm{eval}}
    =
    m(n_0+n_{\mathrm{AL}}).
\]

To provide a baseline for the entropy-based acquisition strategy, we also choose acquisitions uniformly at random for comparison.
Thus, the entropy and random acquisition strategies use the same residual-evaluation budget in each comparison.

Surrogate accuracy is monitored using a fixed set of test design points $\left\{
        \mathbf{x}_q
    \right\}_{q=1}^{N_{\text{test}}}
    \subset \mathcal{X},$
chosen before training. 
At each active-learning iteration, the current surrogate is used to predict the equilibrium coupling variables at all the $N_{\text{test}}$ points. 
Reference equilibrium states are computed independently using OpenMDAO with nonlinear block Gauss-Seidel fixed-point iteration whenever the model contains feedback coupling. 
Crucially, the surrogate is not retrained separately for each test point; a single surrogate iterate is used for all $N_{\text{test}}$ predictions.

Unless otherwise stated, prediction error is reported as the Euclidean norm between the surrogate-predicted fixed point and the OpenMDAO reference solution after scaling each coupling variable to its prescribed bounds. 
This normalized error makes the reported convergence histories comparable across coupling variables with different physical units and magnitudes. 
Each experiment is repeated 20 times with different initial LHS designs. 
The convergence plots report the mean error over these 20 repetitions; the shaded envelopes show the minimum and maximum error at each iteration. 
Entropy-based acquisition is shown in blue, random acquisition in red, and OpenMDAO fixed-point histories, where applicable, are shown in black with residual evaluations accumulated across the $N_{\text{test}}$ points.

\subsection{Satellite problem}
\label{sec:satellite}
The first example is adapted from~\citet{Sankararaman2012satellite}. 
The original model contained three components, but only two were coupled through a feedback loop. 
Here, the third component is omitted because it only computes an additional output state and does not affect the coupled solution. 
The resulting model has two coupling variables,
\[
    y_1=u_{12}\in[6,12],
    \qquad
    y_2=u_{21}\in[6,20],
\]
and a five-dimensional design vector \(\mathbf{x}\in[0,2]^5\); the $\textrm N^2$ diagram is shown in \Cref{fig:satellite_block_diagram}. 
The design variables are assumed to be independent. 
The coupling equations and residuals are
\begin{gather*}
    y_1 = u_{12} = x_1^2+2x_2-x_3+2\sqrt{u_{21}},
    \label{eq:coupling-u12} \\
    y_2 = u_{21} = x_1x_4+x_4^2+x_5+u_{12},
    \label{eq:coupling-u21} \\
    r_1
    =
    u_{12}
    -
    \left(
        x_1^2+2x_2-x_3+2\sqrt{u_{21}}
    \right),
    \label{eq:residual-u12} \\
    r_2
    =
    u_{21}
    -
    \left(
        x_1x_4+x_4^2+x_5+u_{12}
    \right).
    \label{eq:residual-u21}
\end{gather*}
For this example, \(n_0=8\) and \(n_{\mathrm{AL}}=142\), giving 150 evaluations per residual and 300 total residual evaluations.

\begin{figure}[h]
    \centering
    \begin{tikzpicture}[
        node distance=2.2cm,
        block/.style={rectangle, draw, minimum height=1.2cm, minimum width=3cm, align=center, fill=white},
        smallblock/.style={rectangle, draw, minimum height=0.9cm, minimum width=1.2cm, align=center},
        thickarrow/.style={->, thick}
    ]
        
        \node[block] (c1) {Component 1\\[2pt] $y_1=f_1(x_1,x_2,x_3,y_2)$};
        \node[block, below right=0.5cm and 0.5cm of c1] (c2) {Component 2\\[2pt] $y_2=f_2(x_1,x_4,x_5,y_1)$};

        \node[smallblock, above=0.5cm of c1] (x1) {$x_1, x_2, x_3$};
        \draw[thickarrow] (x1.south) -- (c1.north);
        \node[smallblock] at (x1 -| c2) (x2) {$x_1, x_4, x_5$};
        \draw[thickarrow] (x2.south) -- (c2.north);

        \draw[thickarrow] (c1.east) -| (c2.north);
        \draw[thickarrow] (c2.west) -| (c1.south);
        
    \end{tikzpicture}
    \caption{$\textrm{N}^2$ diagram of the satellite problem.}
    \label{fig:satellite_block_diagram}
\end{figure}

Figure~\ref{fig:satellite-sweep} compares the surrogate-predicted zero-residual contours with the true residual contours for $N_{\text{test}}=10$ test design points selected by Latin hypercube sampling. 
The OpenMDAO fixed points are also shown. 
The predicted and true contours are visually close for most inputs, indicating that the surrogate has learned the local geometry of the zero-residual manifold. 
For test inputs 4 and 10, the contour intersections occur outside the prescribed coupling-variable bounds; these cases illustrate the sensitivity of the method to the choice of \(\mathcal{Y}\), since accurate prediction requires the equilibrium state to lie within, or sufficiently near, the training domain.
\begin{figure}[htb!]
    \centering
    \includegraphics[width=1.0\linewidth]{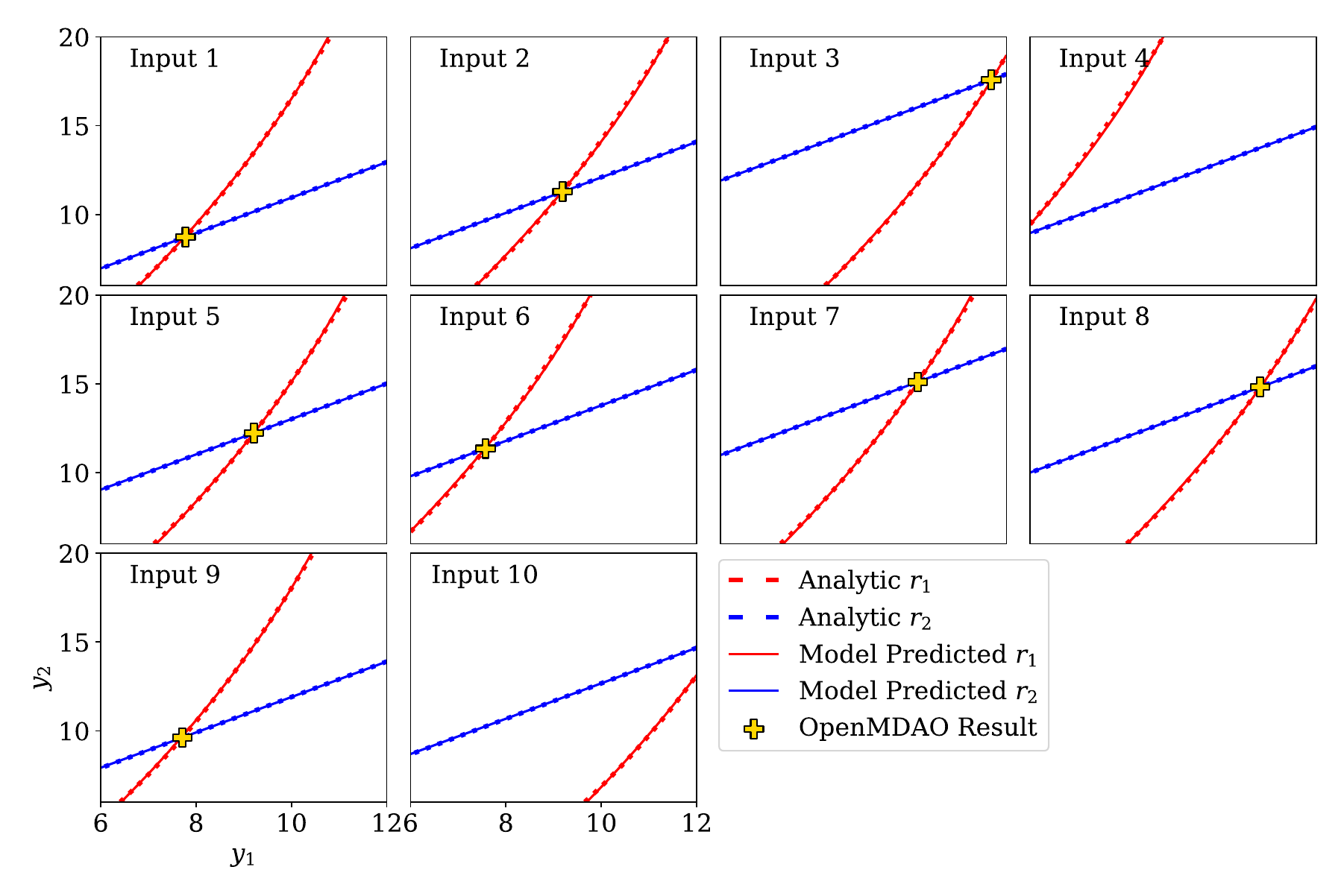}
    \caption{Comparison of zero-residual contours for the satellite system at ten design-variable inputs. OpenMDAO fixed points are also shown. Some contour intersections occur outside the prescribed bounds on \(y_1\) and \(y_2\), as in inputs 4 and 10, which highlights the importance of selecting coupling-variable bounds that contain the equilibrium states of interest.}
    \label{fig:satellite-sweep}
\end{figure}

Figure~\ref{fig:satellite-convergence} shows the normalized fixed-point error at the 10 test inputs. 
Entropy-based acquisition reduces the mean error to the order of \(10^{-4}\) after 300 total residual evaluations and continues to improve over the allotted budget. 
It also substantially outperforms random acquisition for the same number of evaluations. 
After approximately eight consecutive test-point analyses, the cumulative OpenMDAO residual-evaluation count exceeds the number of evaluations needed for the surrogate to reach an average error of approximately \(10^{-3}\), although OpenMDAO converges each individual fixed-point solve to its specified solver tolerance.

\begin{figure}[htb!]
    \centering
    \begin{subfigure}{\linewidth}
        \centering
        \includegraphics[width=1.0\linewidth]{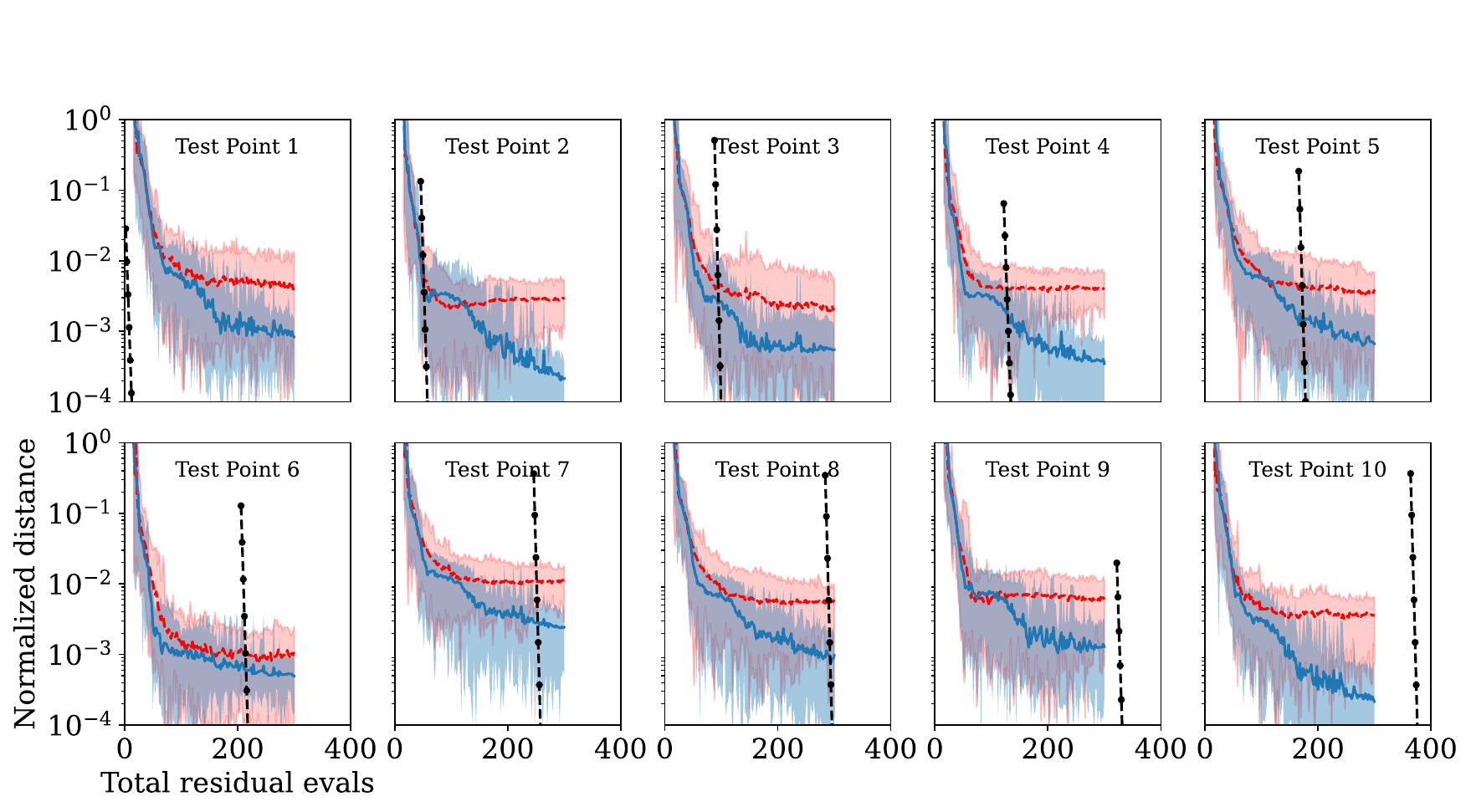}
    \end{subfigure}
    \par\medskip
    \begin{subfigure}{1\linewidth}
        \centering
        \includegraphics[width=\linewidth]{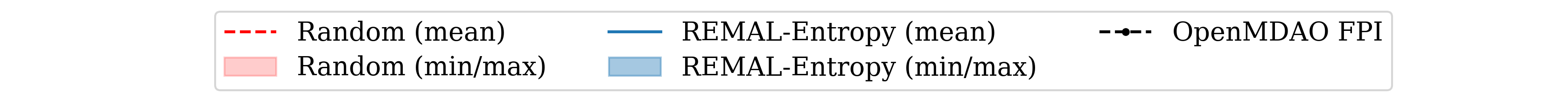}
    \end{subfigure}
    \caption{Convergence history for the satellite problem: normalized distance from the OpenMDAO reference fixed point at 10 test inputs over 142 active-learning iterations, starting from 8 initial seed points.}
    \label{fig:satellite-convergence}
\end{figure}

\subsection{Aerostructures problem}

The second example is adapted from~\citet{ghoreishi2020aerostructures}. 
It represents an aerostructural system consisting of an elastically constrained lifting body in two-dimensional flow. 
A notional diagram is shown in \Cref{fig:aeroelastic_diagram}. 
The two coupling variables are steady-state lift,
\[
    y_1=L\in[0,300],
\]
and pitch angle,
\[
    y_2=\phi\in[-\pi/2,\pi/2].
\]
The scalar design variable \(B\in[0,300)\) is the span of the lifting surface in centimeters.

\begin{figure}[htb!]
    \centering
    \begin{subfigure}{.5\textwidth}
        \includegraphics[width=1\linewidth]{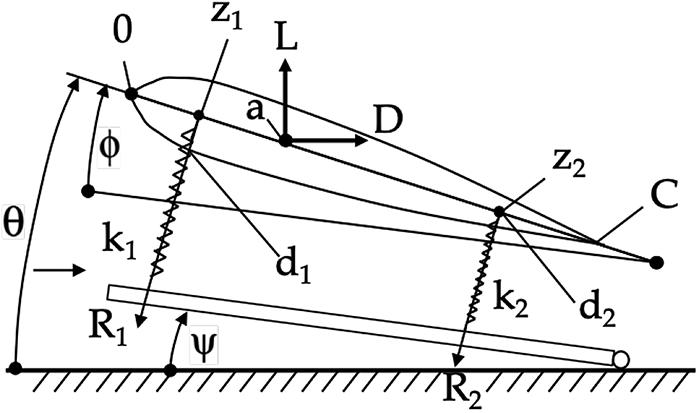}
    \caption{Aerostructural airfoil setup~\cite{ghoreishi2020aerostructures}.}
    \label{fig:aeroelastic_diagram}
    \end{subfigure}%
    \begin{subfigure}{.5\textwidth}
        \centering
            \begin{tikzpicture}[
        node distance=2.2cm,
        block/.style={rectangle, draw, minimum height=1.2cm, minimum width=3cm, align=center, fill=white},
        smallblock/.style={rectangle, draw, minimum height=0.9cm, minimum width=1.2cm, align=center},
        thickarrow/.style={->, thick}
    ]
        
        \node[block] (c1) {Lift\\[2pt] $y_1=f_1(B,y_2)$};
        \node[block, below right=0.5cm and 0.5cm of c1] (c2) {Pitch Angle\\[2pt] $y_2=f_2(y_1)$};

        \node[smallblock, above=0.5cm of c1] (x1) {$B$};
        \draw[thickarrow] (x1.south) -- (c1.north);

        \draw[thickarrow] (c1.east) -| (c2.north);
        \draw[thickarrow] (c2.west) -| (c1.south);
        
    \end{tikzpicture}
    \caption{$\textrm{N}^2$ diagram.}
    \label{fig:aerostrux_block_diagram}
    \end{subfigure}
    \caption{Aerostructural problem setup and coupling structure.}
    \label{fig:aerostructural_overview}
\end{figure}

The disciplinary equations are
\begin{gather*}
    L
    =
    f_1(\phi)
    =
    qBC
    \left(
        2\pi(\phi+\psi)
        +
        r
        \left[
            1
            -
            \cos
            \left(
                \frac{\pi(\phi+\psi)}{2\theta}
            \right)
        \right]
    \right),
    \label{eq:coupling-lift} \\
    \phi
    =
    f_2(L)
    =
    \left(
        \frac{L}{k_1(1+p)}
        -
        \frac{Lp}{k_2(1+p)}
    \right)
    \frac{1}{C(z_2-z_1)}.
    \label{eq:coupling-phi}
\end{gather*}
The residuals are
\begin{gather*}
    r_1 = L-f_1(\phi), \\
    r_2 = \phi-f_2(L).
\end{gather*}
For this example, \(n_0=8\) and \(n_{\mathrm{AL}}=92\), giving 100 evaluations per residual and 200 total residual evaluations.

Figure~\ref{fig:aerostrux-sweep} shows surrogate-predicted and true zero-residual contours for $N_{\text{test}}=10$ values of \(B\). 
The selected values are evenly spaced over the design-variable domain and are shown in increasing order. 
The OpenMDAO fixed points are also shown. 
As in the satellite case, the predicted contours agree closely with the true contours over the test set.
\begin{figure}[htb!]
    \centering
    \includegraphics[width=1.0\linewidth]{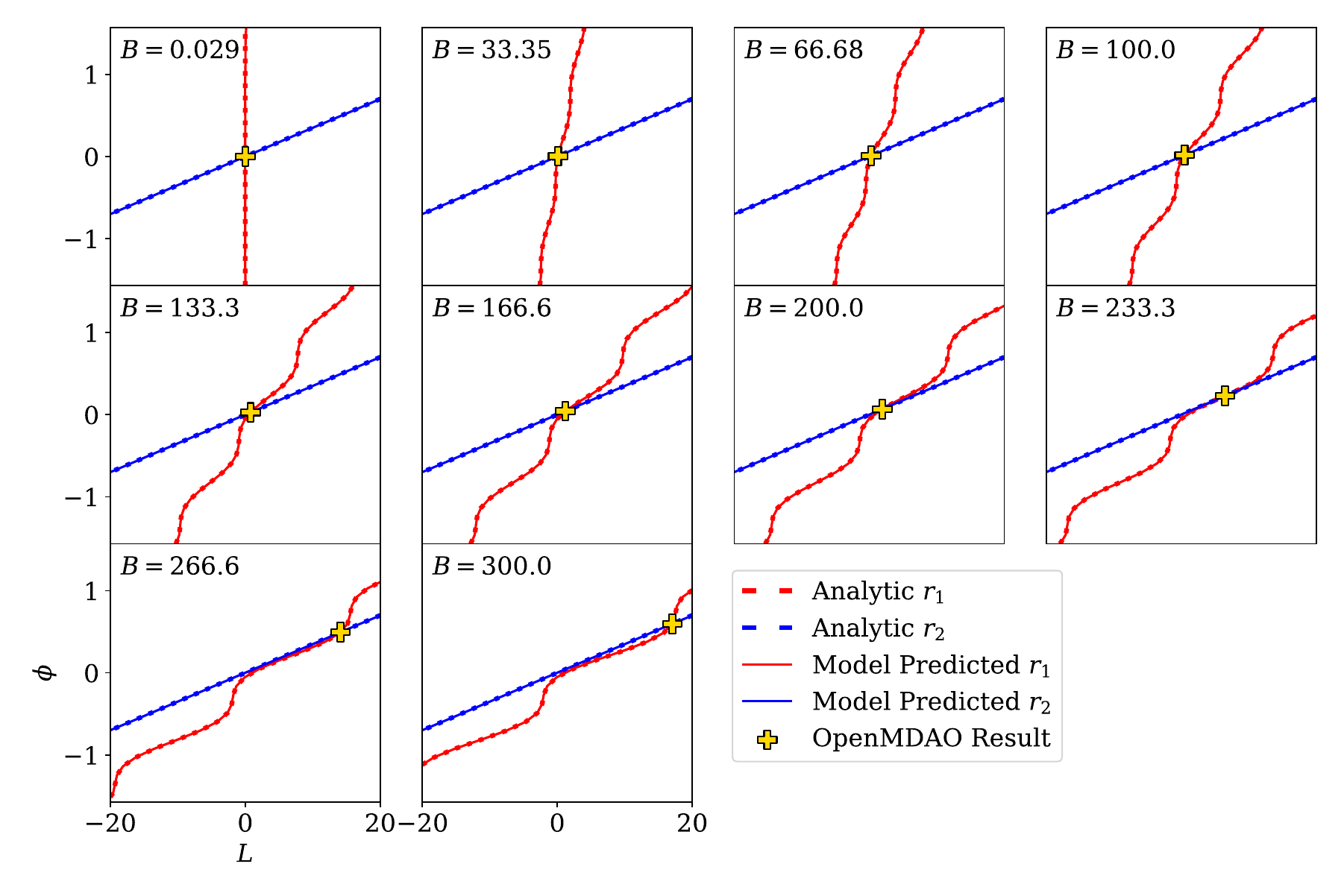}
    \caption{Comparison of zero-residual contours for the aerostructures system at ten design-variable inputs. OpenMDAO fixed points are also shown.}
    \label{fig:aerostrux-sweep}
\end{figure}

\begin{figure}[htb!]
    \centering
    \begin{subfigure}{\linewidth}
        \centering
        \includegraphics[width=1.0\linewidth]{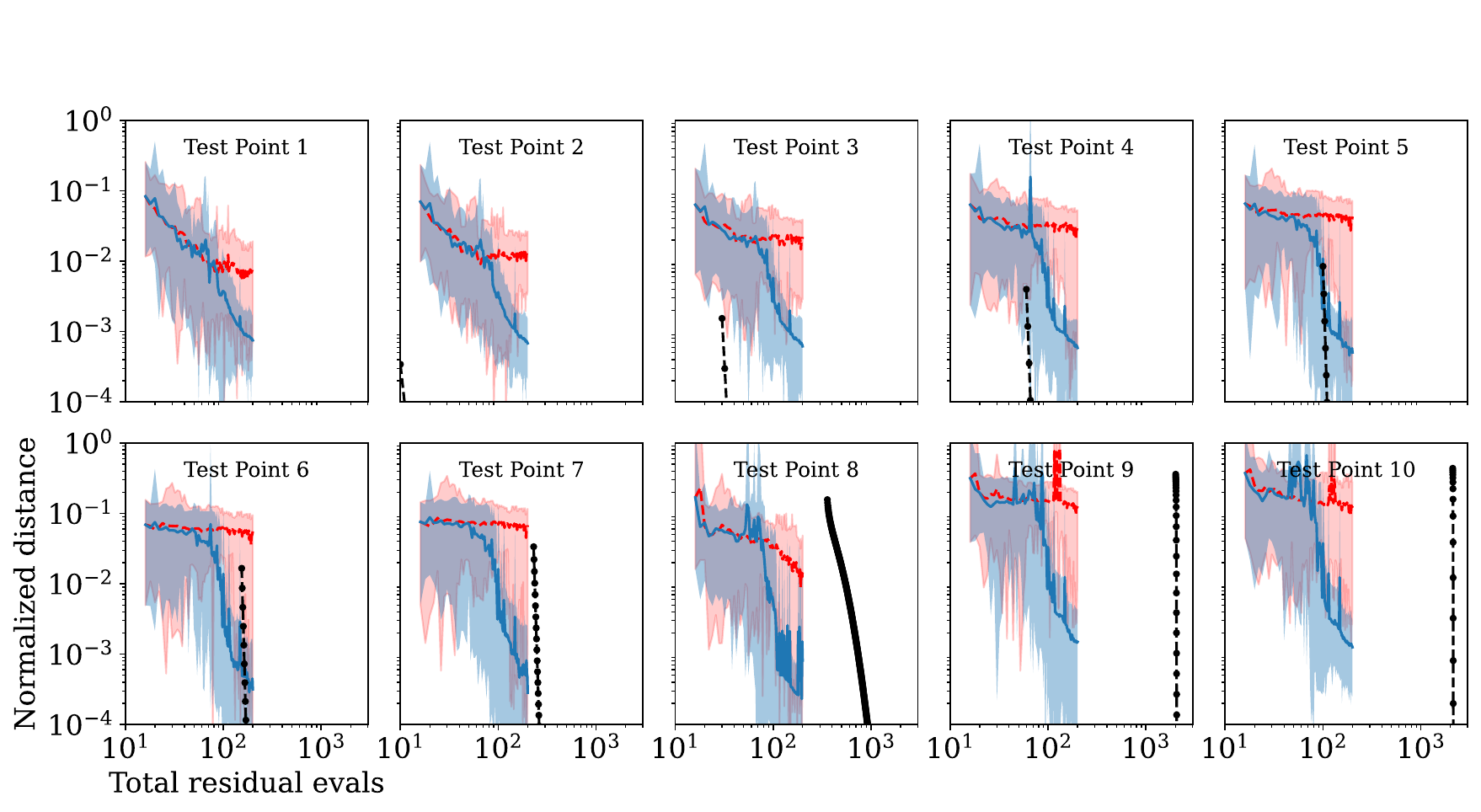}
    \end{subfigure}
    \par\medskip
    \begin{subfigure}{1\linewidth}
        \centering
        \includegraphics[width=1\linewidth]{figures/legend.png}
    \end{subfigure}
    \caption{Convergence history for the aerostructural problem: normalized distance from the OpenMDAO reference fixed point at $N_{\text{test}}=10$ test inputs over 92 active-learning iterations, starting from 8 initial seed points.}
    \label{fig:aerostrux-convergence}
\end{figure}

The convergence histories in \Cref{fig:aerostrux-convergence} follow the same format as in the satellite example. 
After 200 total residual evaluations, entropy-based active learning achieves a mean normalized error on the order of \(10^{-4}\) and substantially outperforms random acquisition. 
After approximately six consecutive test-point analyses, the cumulative number of residual evaluations used by OpenMDAO exceeds the number required for the surrogate to reach an average error near \(10^{-3}\), although OpenMDAO still converges each individual solve to the specified solver tolerance.

The eighth test point, corresponding to \(B=233.33\), is comparatively difficult for OpenMDAO. 
Inspection of the residual contours in \Cref{fig:aerostrux-sweep} shows that the two true contours are nearly parallel at their intersection, which slows nonlinear block Gauss--Seidel convergence. 
Figure~\ref{fig:aerostrux-spike} shows that the number of OpenMDAO iterations increases sharply as \(B\) approaches this value. 
The surrogate prediction error does not show a comparable localized degradation in \Cref{fig:aerostrux-convergence}.

\begin{figure}[hb!]
    \centering
    \includegraphics[width=0.5\linewidth]{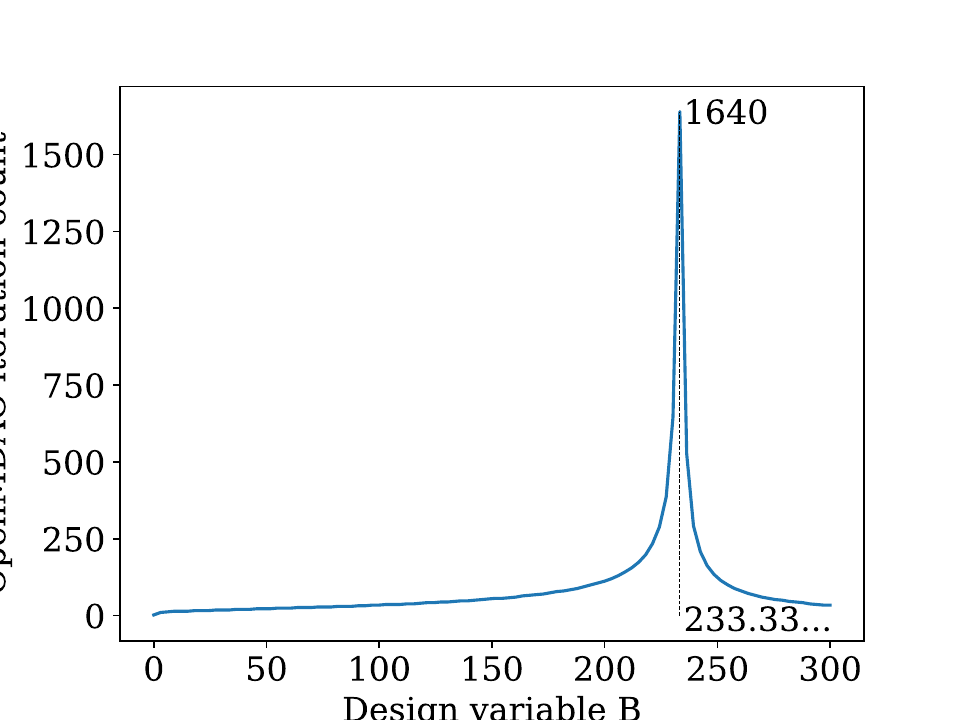}
    \caption{Number of nonlinear block Gauss--Seidel iterations required for OpenMDAO to converge the aerostructures model as a function of the design variable \(B\).}
    \label{fig:aerostrux-spike}
\end{figure}

\subsection{Finite-element gas-turbine heat-transfer and economics}
\begin{figure}[h!]
    \centering
    \begin{tikzpicture}[
        circ/.style={circle, draw, minimum size = 0.5cm, inner sep = 0pt, fill={rgb:black,1;white,15}, font=\bfseries},
    ]
        \draw (0, 0) node[inner sep=0] {\includegraphics[width=0.88\linewidth]{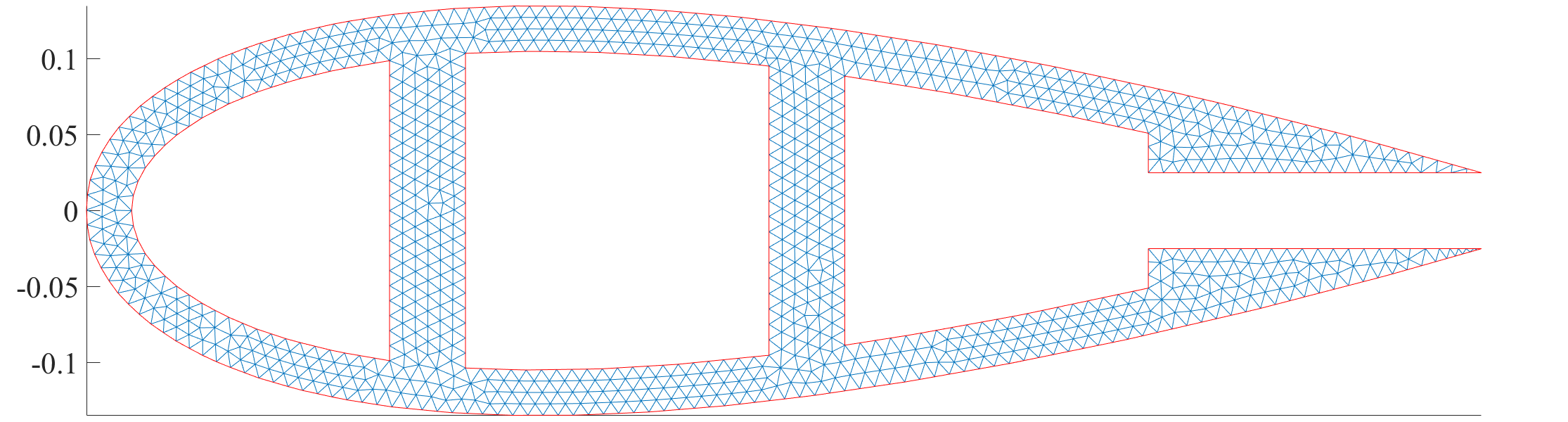}};
        
        \node at (-4.8, 0) [circ] (label1) {1};
        \draw (label1.east) -- (-3.65, 0);

        \node at (-1.6, 0) [circ] (label2) {2};
        \draw (label2.east) -- (-0.13, 0);

        \node at (1.8, 0) [circ] (label3) {3};
        \draw (label3.east) -- (3, -0.8);

        \node at (-6, 1.5) [circ] (label4) {4};
        \draw (label4.east) -- (-5.25, 1.4);
    \end{tikzpicture}
    \hspace*{0.42cm}\includegraphics[width=0.9\linewidth]{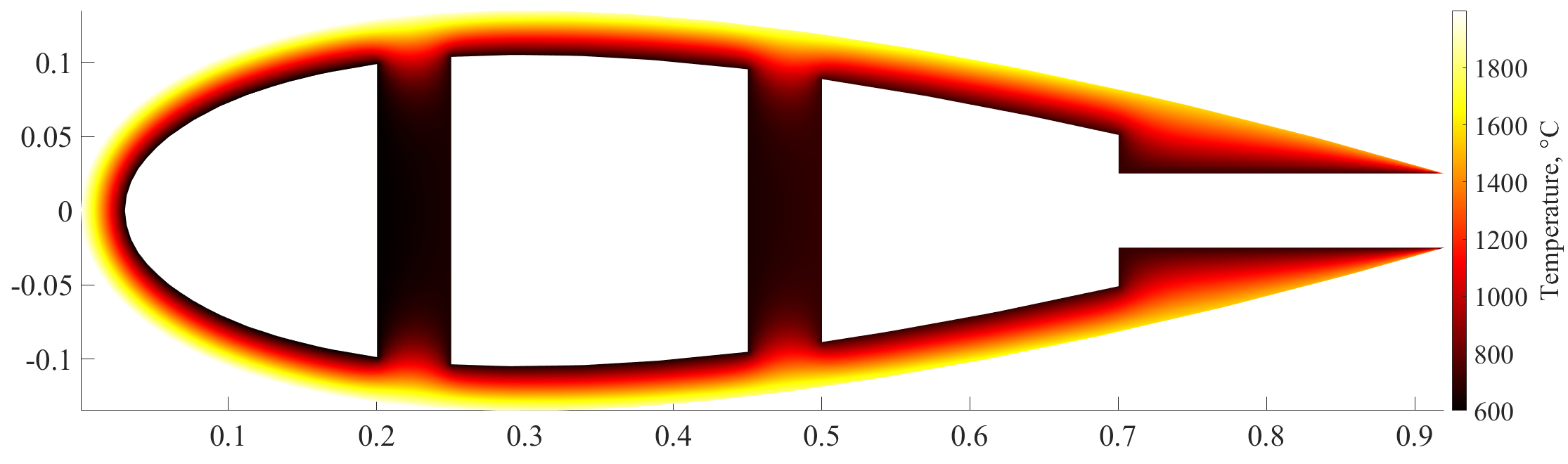}
    \caption{Coarse two-dimensional turbine blade mesh (top) and representative steady-state temperature distribution (bottom, in \(^{\circ}\)C). Numbered labels correspond to the boundaries on which Dirichlet boundary conditions are applied.}
    \label{fig:turbine_fem}
\end{figure}
The third example is adapted from Jakeman et al.~\cite{jakeman2022adaptive}. 
It models a gas-turbine system with four disciplines: a blade heat-transfer model, a blade-lifetime model, a turbine-performance model, and an economics model. 
The multi-fidelity structure in the source work is omitted here so that the example can be used as a single-fidelity coupled-system test case. 
The model has four coupling variables and an 11-dimensional design vector \(\mathbf{x}\). The coupling variables are summarized in Table~\ref{table:turbine-coupling-vars}, and the design inputs are summarized in Tables~\ref{table:turbine-inputs} and~\ref{table:turbine-input-components}. 
The system output \(q\) is the economics coupling variable \(y_4=r_{\mathrm{econ}}\), which represents turbine revenue.

\begin{table}[h]
\centering
\begin{tabular}{ c c c c }
    \hline
    \textbf{Task Index} & \textbf{Coupling Variable} & \textbf{Bounds} & \textbf{Description} \\
    \hline\hline
    1 & $y_1=T_{bulk}$ & [1000, 1200] & Bulk Metal Temperature ($K$) \\
    2 & $y_2=t_{fail}$ & [0.01, 100] & Blade Lifetime ($hr$) \\
    3 & $y_3=P_{eng}$ & [4.8$\mathrm{e}$+6, 6.6$\mathrm{e}$+6] & Max Turbine Power ($W$) \\
    4 & $y_4=q=r_{econ}$ & [0, 1.0$\mathrm{e}$+5] & Turbine Revenue \\
    \hline
\end{tabular}
\caption{Summary of turbine model coupling variables.}
\label{table:turbine-coupling-vars}
\end{table}

\begin{table}[h]
\centering
\begin{tabular}{ c c }
    \hline
    \textbf{Design Input} & \textbf{Components} \\
    \hline\hline
    $x_1$ & $\left[ T_{c_1}, T_{c_2}, T_{c_3}, k, h_{le}, h_{te} \right]^T$ \\
    $x_2$ & $P_{lm}$ \\
    $x_3$ & $\left[ \dot{m}, T_g, F_{perf} \right]^T$ \\
    $x_4$ & $F_{econ}$ \\
    \hline
\end{tabular}
\caption{Summary of turbine model inputs.}
\label{table:turbine-inputs}
\end{table}

\begin{table}[h]
\centering
\begin{tabular}{ c c c c }
    \hline
    \textbf{Component} & \textbf{Description} & \textbf{Unit} & \textbf{Input}\\
    \hline\hline
    $T_{c_j}$   & Coolant temperature in chamber $j\in\{1,2,3\}$ & $K$      & $x_1$ \\
    $k$         & Blade thermal conductivity & $W/(m \cdot K)$                & $x_1$ \\
    $h_{le}$    & Leading edge heat transfer coefficient & $W/(m^2 \cdot K)$  & $x_1$ \\
    $h_{te}$    & Trailing edge heat transfer coefficient & $W/(m^2 \cdot K)$ & $x_1$ \\
    $P_{lm}$    & Larson-Miller parameter & -                                 & $x_2$ \\
    $\dot{m}$   & Coolant mass flow rate & $kg/s$                             & $x_3$ \\
    $T_g$       & External gas flow temperature & $K$                         & $x_3$ \\
    $F_{perf}$  & Performance factor & -                                      & $x_3$ \\
    $F_{econ}$  & Economics factor & -                                        & $x_4$ \\
    \hline
\end{tabular}
\caption{Summary of turbine model input components. Surface numbering conventions follow those in \Cref{fig:turbine_fem}.}
\label{table:turbine-input-components}
\end{table}

The coupling structure is shown in \Cref{fig:turbine_block_diagram}. 
Unlike the preceding examples, this model is purely feed-forward. 
It therefore does not require iterative fixed-point solution, but it still defines a residual system whose zero level set corresponds to the consistent model output. 
This example tests whether the residual surrogate can also recover consistent states for feed-forward systems without modifying the proposed method.

\begin{figure}[h]
    \centering
    \begin{tikzpicture}[
        node distance=2.2cm,
        block/.style={rectangle, draw, minimum height=1.2cm, minimum width=3cm, align=center},
        smallblock/.style={rectangle, draw, minimum height=0.9cm, minimum width=1.2cm, align=center},
        thickline/.style={->, thick}
    ]
        
        \node[block] (heat) {Heat Transfer\\[2pt] $y_1=f_1(x_1)$};
        \node[block, right=1.5cm of heat] (life) {Lifetime\\[2pt] $y_2=f_2(x_2,y_1)$};
        \node[block, right=1.5cm of life] (econ) {Economics\\[2pt] $y_4=f_4(x_4,y_2,y_3)$};
    
        \node[smallblock] (x1) [above=1cm of heat] {$x_1$};
        \node[smallblock] (x2) [above=1cm of life] {$x_2$};
        \node[smallblock] (x4) [above=1cm of econ] {$x_4$};
        
        \node[block, below=1cm of econ] (perf) {Performance\\[2pt] $y_3=f_3(x_3)$};
        \node[smallblock, left=1.2cm of perf] (x3) {$x_3$};
        
        \node[smallblock, right=1.5cm of econ] (q) {$q$};
        
        \draw[thickline] (x1) -- (heat);
        \draw[thickline] (x2) -- (life);
        \draw[thickline] (x4) -- (econ);
        
        \draw[thickline] (heat) -- (life);
        \draw[thickline] (life) -- (econ);
        
        \draw[thickline] (x3) -- (perf);
        \draw[thickline] (perf.north) -- (econ.south);
        
        \draw[thickline] (econ) -- (q);
    \end{tikzpicture}
    \caption{
        $\textrm N^2$ diagram of the turbine economics coupled system. 
        Inputs are summarized in Table~\ref{table:turbine-inputs}.
    }
    \label{fig:turbine_block_diagram}
\end{figure}

\subsubsection*{Blade heat-transfer model}
\label{sec:turbine}
The heat-transfer component models a turbine blade in a hot external flow with internal cooling channels. 
The bulk blade temperature \(T_{bulk}=y_1\) is obtained by solving the two-dimensional stationary heat equation with a finite-element method defined (with slight abuse of notation) as:
\begin{equation*}
    k\nabla^2 h(\omega)=0,
    \qquad
    \omega \in\Omega .
\end{equation*}
The boundary conditions are
\begin{gather*}
    h(\omega)=T_{c_j},
    \qquad
    \omega \in \partial\Omega_j,\quad j=1,2,3, \\
    h(\omega)
    =
    h_{te}
    +
    (h_{le}-h_{te})
    \exp
    \left(
        -4
        \frac{(1\times 10^{-3})\omega_1^2}{4 \times 10^{-6}}
    \right),
    \qquad
    \omega \in\partial\Omega_4 .
\end{gather*}
Here, $\omega_1$ is the scalar spatial component of $\omega$ in the chordwise direction.
A MATLAB function using the PDE Toolbox generates a two-dimensional blade mesh from an STL geometry file and solves the heat equation. 
The function is compiled to a Python package and called from the primary Python workflow. 
The finite element mesh and a representative temperature field are shown in \Cref{fig:turbine_fem}. 
The component output \(T_{bulk}\) is computed as the spatial average of the blade temperature field.

\subsubsection*{Blade-lifetime model}

The blade-lifetime component predicts the expected blade time to failure,
\begin{equation*}
    t_{fail}
    =
    \exp(P_{lm}/T_{bulk}-20),
\end{equation*}
where \(P_{lm}\) is the Larson--Miller parameter for creep rupture. 
Because the exponential dependence causes \(t_{fail}\) to span several orders of magnitude over the relevant range of \(T_{bulk}\), direct standardization of this response can lead to numerical precision issues. 
We therefore use the log-transformed residual
\begin{equation*}
    r_2(\mathbf{x},\{y_i\}_{i=1}^m)
    =
    \ln(t_{fail})
    -
    (P_{lm}/T_{bulk}-20).
\end{equation*}
This transformation preserves the zero level set of the original residual, so the surrogate can be trained directly on the transformed residual observations.

\subsubsection*{Turbine-performance model}

The turbine-performance component predicts the maximum engine power,
\begin{equation*}
    P_{eng}
    =
    F_{perf}
    (\dot{m}_0-N\dot{m})
    C_p T_0
    \left(
        1+T_g/T_0
        -
        2\sqrt{T_g/T_0}
    \right),
\end{equation*}
where \(F_{perf}\), \(T_g\), and \(\dot{m}\) are design inputs, and the remaining quantities are constants. 
The associated coupling variable is \(y_3=P_{eng}\).

\subsubsection*{Economics model}

The economics component predicts the turbine revenue,
\begin{equation*}
    r_{econ}
    =
    F_{econ}t_{fail}P_{eng}(c_0/1000),
\end{equation*}
where \(c_0=0.07\) is constant and \(F_{econ}\) is a design input. 
This quantity is both the fourth coupling variable and the system output,
$y_4=q=r_{econ}.$
It is not used as an input to any other component in the feed-forward turbine model.

For the feed-forward turbine results, \(n_0=8\) and \(n_{\mathrm{AL}}=92\). 
Because the model has \(m=4\) residual tasks, the total budget is 100 evaluations per residual and 400 total residual evaluations. 
The consistent state is four-dimensional and is recovered from the intersection of the four predicted zero-residual level sets.

Figure~\ref{fig:turbine_residual_matrix} visualizes a representative solution as pairwise projections of the four-dimensional residual system. 
For each pair of coupling variables, the figure compares true and surrogate-predicted zero-residual contours. 
Because this model is feed-forward, some residuals do not depend on some coupling variables. 
In the corresponding pairwise projections, the residual contours are parallel to the axis of the variable on which the residual does not depend.

\begin{figure}
    \centering
    \includegraphics[width=0.9\linewidth]{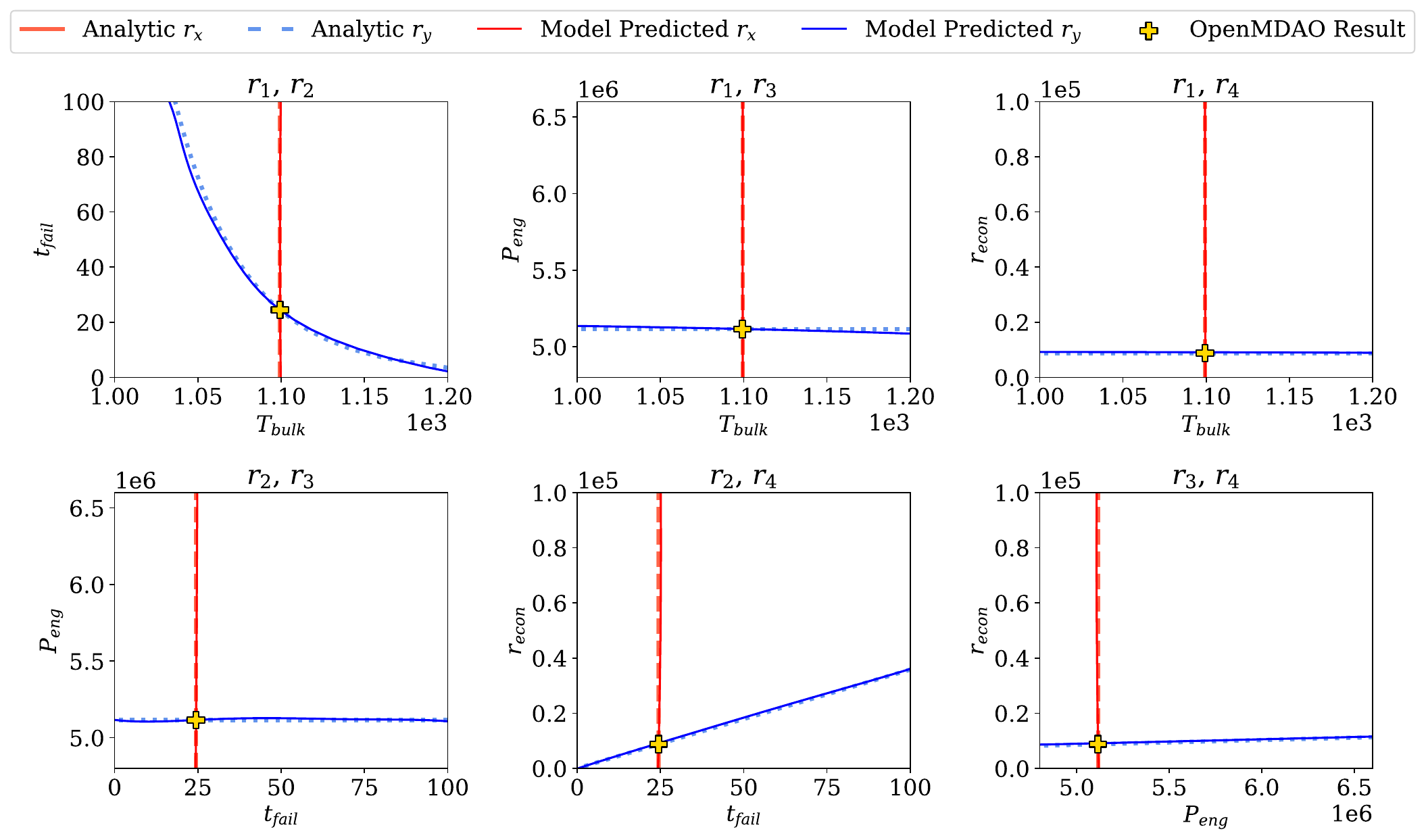}
    \caption{Pairwise projections of the turbine residual system after 92 active-learning iterations, together with the corresponding OpenMDAO-consistent state, for a representative design-variable input.}
    \label{fig:turbine_residual_matrix}
\end{figure}

Figure~\ref{fig:turbine-convergence} shows the normalized prediction-error histories for the 10 test inputs. 
After 400 total residual evaluations, entropy-based active learning achieves a mean normalized error on the order of \(10^{-3}\) and continues to decrease over the allotted budget. 
Entropy-based acquisition remains better than random acquisition on average, although the gap is smaller than in the satellite and aerostructures examples. 
A likely reason is the higher augmented dimension of this problem: the turbine surrogate is trained over 15 variables, consisting of 11 design variables and 4 coupling variables, compared with 7 variables for the satellite problem and 3 variables for the aerostructures problem. 
A larger sample budget may therefore be needed before the contour-focused behavior of the entropy acquisition function becomes dominant.

Because the feed-forward turbine system can be evaluated by a single forward pass, OpenMDAO does not require nonlinear fixed-point iteration for this case. 
Accordingly, \Cref{fig:turbine-convergence} does not include an OpenMDAO convergence history.

\begin{figure}[htb!]
    \centering
    \begin{subfigure}{\linewidth}
        \centering
        \includegraphics[width=0.99\linewidth]{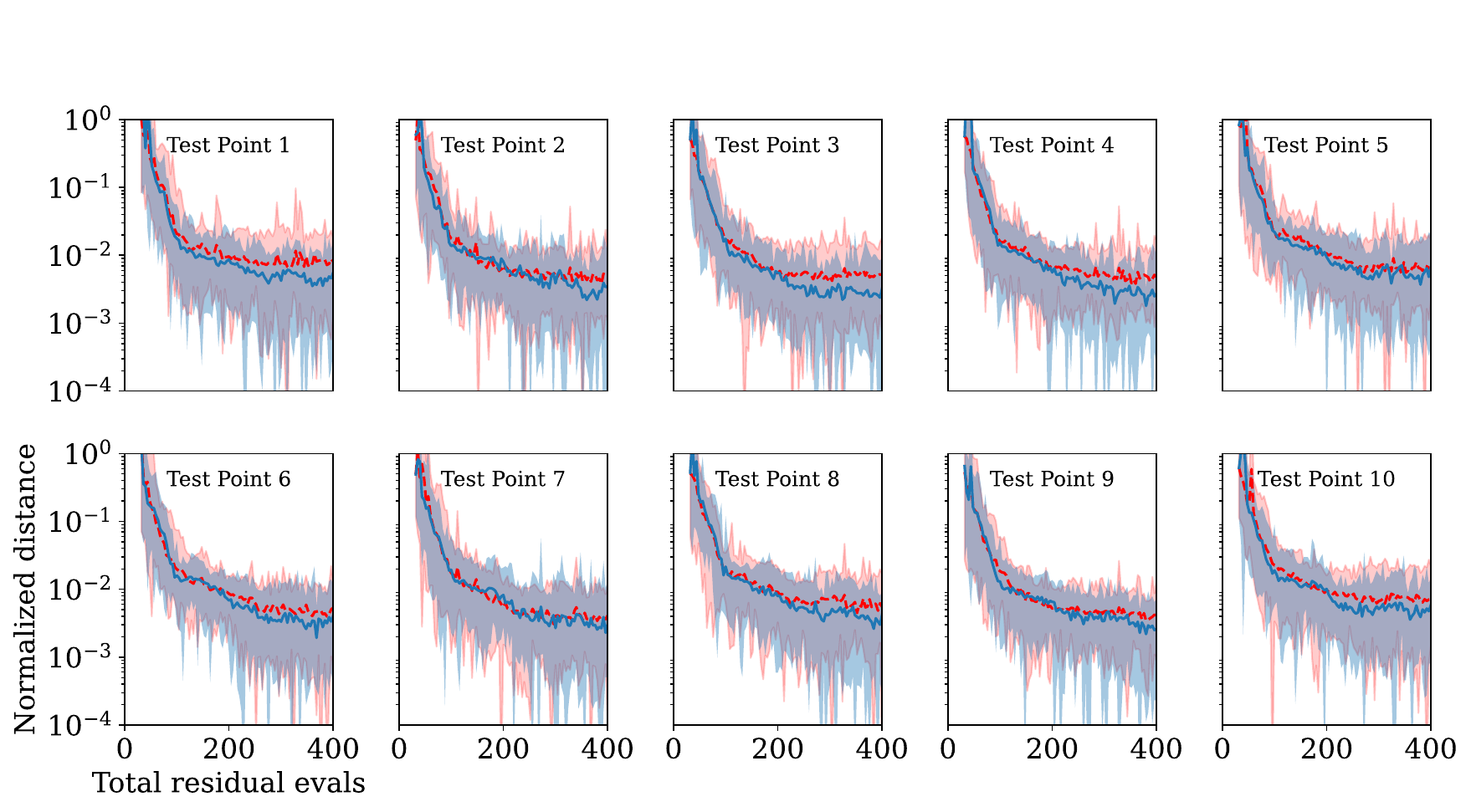}
    \end{subfigure}
    \par\medskip
    \begin{subfigure}{1\linewidth}
        \centering
        \includegraphics[width=1\linewidth]{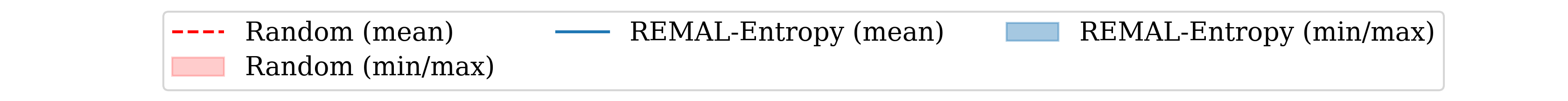}
    \end{subfigure}
    \caption{Normalized error history for the feed-forward turbine problem over 92 active-learning iterations, starting from 8 initial seed points.}
    \label{fig:turbine-convergence}
\end{figure}

\subsection{Turbine model with added feedback coupling}

The final example modifies the preceding turbine model to introduce feedback coupling. 
This produces a more general coupled system that requires iterative fixed-point solution when evaluated directly. 
Two modifications are made: first, the feed-forward coupling between the performance and economics disciplines is made bidirectional by adding a revenue-dependent term to the performance equation; second, feedback coupling between the performance and lifetime disciplines is introduced by adding performance- and lifetime-dependent terms. 
The modified lifetime and performance equations are
\begin{equation}
    \begin{gathered}
        t^\prime_{fail}
        =
        \exp
        \left(
            P_{lm}/T_{bulk}
            -
            20
            \underbrace{~+ 2(10^{-7}P_{eng})^2}_{\text{feedback}}
        \right),
        \\
        P^\prime_{eng}
        =
        F_{perf}
        (\dot{m}_0-N\dot{m})
        C_p T_0
        \left(
            1+T_g/T_0
            -
            2\sqrt{T_g/T_0}
        \right)
         \underbrace{~ +100t_{fail}^2+10^{-4}r_{econ}^2}_{\text{feedback}}.
    \end{gathered}
\end{equation}
The added feedback paths are shown in red in \Cref{fig:turbine_feedback_block_diagram}.

\begin{figure}[h]
    \centering
    \begin{tikzpicture}[
        node distance=2.2cm,
        block/.style={rectangle, draw, minimum height=1.2cm, minimum width=3cm, align=center},
        smallblock/.style={rectangle, draw, minimum height=0.9cm, minimum width=1.2cm, align=center},
        thickline/.style={->, thick}
    ]
        
        \node[block] (heat) {Heat Transfer\\[2pt] $y_1=f_1(x_1)$};
        \node[block, right=1.5cm of heat] (life) {Lifetime\\[2pt] $y_2=f_2(x_2,y_1,y_3)$};
        \node[block, right=1.5cm of life] (econ) {Economics\\[2pt] $y_4=f_4(x_4,y_2,y_3)$};
    
        \node[smallblock] (x1) [above=1cm of heat] {$x_1$};
        \node[smallblock] (x2) [above=1cm of life] {$x_2$};
        \node[smallblock] (x4) [above=1cm of econ] {$x_4$};
        
        \node[block, below=1cm of life] (perf) {Performance\\[2pt] $y_3=f_3(x_3,y_2,y_4)$};
        \node[smallblock, left=1.2cm of perf] (x3) {$x_3$};
        
        \node[smallblock, right=1.5cm of econ] (q) {$q$};
        
        \draw[thickline] (x1) -- (heat);
        \draw[thickline] (x2) -- (life);
        \draw[thickline] (x4) -- (econ);
        
        \draw[thickline] (heat) -- (life);
        \draw[thickline] (life) -- (econ);

        \draw[thickline, very thick, color=red] ([xshift=-3pt]life.south) -- ([xshift=-3pt]perf.north);
        \draw[thickline, very thick, color=red] ([xshift=3pt]perf.north) -- ([xshift=3pt]life.south);
        
        \draw[thickline] (x3) -- (perf);
        \draw[thickline] ([yshift=-3pt]perf.east) -| ([xshift=3pt]econ.south);
        \draw[thickline, very thick, color=red] ([xshift=-3pt]econ.south) |- ([yshift=3pt]perf.east);
        
        \draw[thickline] (econ) -- (q);
    \end{tikzpicture}
    \caption{Diagram of the modified turbine economics model. Added feedback couplings are shown as red arrows.}
    \label{fig:turbine_feedback_block_diagram}
\end{figure}

For this example, \(n_0=8\) and \(n_{\mathrm{AL}}=92\), giving 100 evaluations per residual and 400 total residual evaluations. 
Figure~\ref{fig:turbine_feedback_matrix} shows pairwise projections of the four-dimensional residual system for a representative test input. 
Compared with the feed-forward turbine model in \Cref{fig:turbine_residual_matrix}, the modified model exhibits nonlinear pairwise residual relationships for variables that are directly feedback coupled or downstream of a feedback-coupled component. 
In particular, the projections involving \((y_2,y_3)\), corresponding to \((t_{fail},P_{eng})\), and \((y_3,y_4)\), corresponding to \((P_{eng},r_{econ})\), are no longer dominated by vertical or horizontal contours.
\begin{figure}[htb!]
    \centering
    \includegraphics[width=1.0\linewidth]{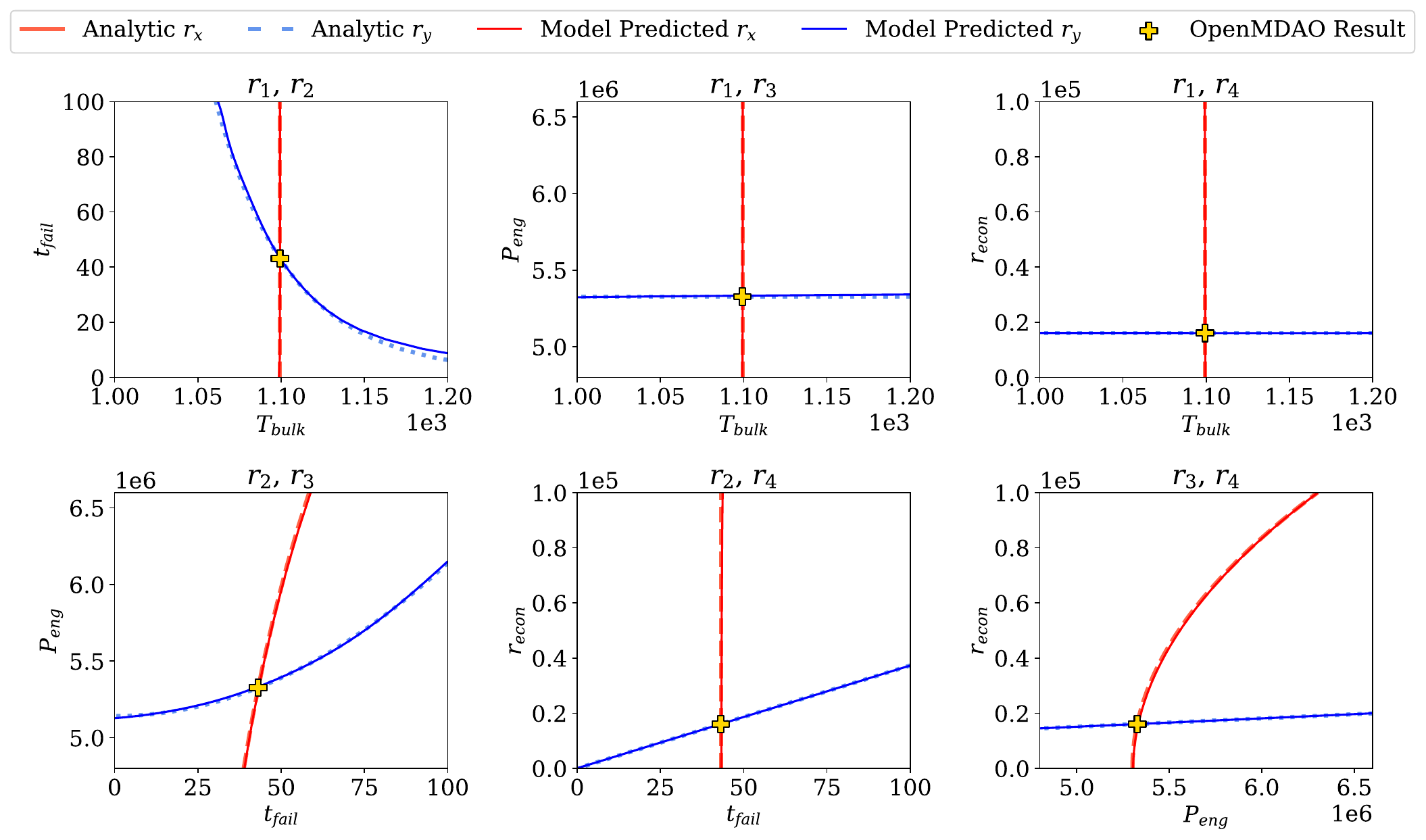}
    \caption{Pairwise projections of the residual intersections for the modified turbine system with added feedback coupling. Feedback is introduced between \(y_2\) and \(y_3\) (\(t_{fail}\) and \(P_{eng}\)) and between \(y_3\) and \(y_4\) (\(P_{eng}\) and \(r_{econ}\)).}
    \label{fig:turbine_feedback_matrix}
\end{figure}

\begin{figure}[htb!]
    \centering
    \begin{subfigure}{\linewidth}
        \centering
        \includegraphics[width=1.0\linewidth]{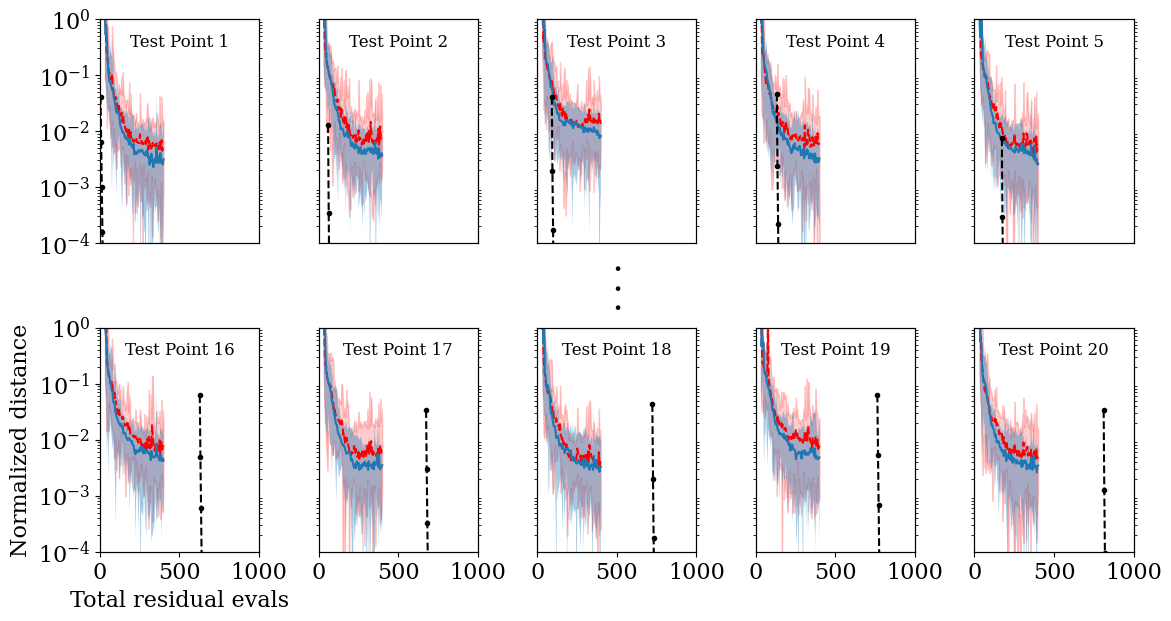}
    \end{subfigure}
    \par\medskip
    \begin{subfigure}{1\linewidth}
        \centering
        \includegraphics[width=\linewidth]{figures/legend.png}
    \end{subfigure}
    \caption{Normalized error history for the modified feedback-coupled turbine problem over 92 active-learning iterations, starting from 8 initial seed points. Results are reported over 20 repetitions and $N_{\text{test}}=20$ test inputs.}
    \label{fig:turbine-vs-openmdao}
\end{figure}

The convergence histories in \Cref{fig:turbine-vs-openmdao} show that entropy-based acquisition reduces the mean normalized fixed-point error to the order of \(10^{-3}\) after 400 total residual evaluations and outperforms random acquisition over the same budget.
Beyond this budget, the mean error using entropy-based acquisition appears to exhibit a decreasing trend.
The modified turbine problem is more difficult than the feed-forward case because the feedback terms create nonlinear dependencies among the residual contours. 
Within the 20-test-point validation budget used here, the surrogate becomes more evaluation-efficient than OpenMDAO after roughly 10 test points.
The trained surrogate can be reused for additional design points without further residual evaluations, so its relative cost continues to decrease as the number of required system analyses increases.

\section{Conclusions}
\label{section:conclusion}
This paper introduced a residual-equilibrium surrogate framework for multidisciplinary design analysis of coupled engineering systems. Rather than approximating the disciplinary solvers individually or learning a direct map from design variables to converged coupling variables, the proposed method learns the residual operator whose zero level set defines multidisciplinary consistency. This changes the surrogate modeling task from predicting a collection of converged analyses to approximating the implicit equilibrium manifold over the augmented design--coupling space.

The resulting workflow generates residual data from decoupled disciplinary evaluations, trains a multitask Gaussian process to model the residual components jointly, enriches the model with entropy-based active learning near uncertain zero-residual contours, and recovers equilibrium states for new designs through a least-residual solve on the trained surrogate. Across the satellite, aerostructural, feed-forward turbine, and feedback-coupled turbine examples, entropy-guided acquisition consistently improved over random acquisition under the same residual-evaluation budgets. The numerical results also show the intended amortized behavior of the approach: fixed-point iteration remains the more accurate tool for an isolated analysis, but the trained residual surrogate can be reused for many additional design points without further disciplinary evaluations.

These results suggest that residual-manifold modeling is a useful alternative to fixed-point iteration when many coupled-system analyses are required, as in MDO, uncertainty propagation, design-space exploration, or digital twin updating. The method also provides a common treatment of feedback-coupled and feed-forward systems, since both can be represented through residual equations whose zeros encode consistency. The stability bound further clarifies the role of residual-surrogate accuracy and root conditioning in determining the error of the recovered fixed point.

\bigskip
\noindent\textbf{Caveats.}
The main cost of the proposed approach is paid upfront. A Gaussian process residual surrogate must be trained over the augmented design--coupling space before it can be reused, and this cost is justified only when the number of downstream analyses is large enough to amortize the training effort. Within the evaluation budgets used in this paper, the surrogate does not match the accuracy of OpenMDAO fixed-point iteration, which can converge individual analyses to near machine precision and often with fewer evaluations for a single design point. The normalized errors achieved here, on the order of \(10^{-4}\) to \(10^{-3}\), are nevertheless appropriate for many mid-fidelity surrogate modeling and design-space exploration settings.

The convergence histories also indicate that the surrogate errors had not clearly plateaued within the tested budgets, so additional residual evaluations may further improve fixed-point predictions. This must be balanced against the poor scaling of exact Gaussian process inference with increasing data size~\cite{Gardner2018gpytorch, Balandat2020botorch}. Scalable GP approximations, structured kernels, or local residual surrogates are natural directions for larger coupled systems.

Finally, the present active-learning strategy acquires one new observation for each residual task at every iteration, which gives all disciplines the same evaluation budget. This is convenient but not required by the multitask formulation. Future work should allow task-dependent acquisition when disciplinary costs differ, and should incorporate multi-fidelity residual observations when cheaper approximations of selected disciplines are available. The code used for this project has been released on GitHub to facilitate future work.

\section{Acknowledgments}
This research was partially supported by a Rising Researcher award 
ICDS\_RR25\_027416
from Penn State's Institute for Computational \& Data Sciences (ICDS) (RRID:SCR\_025154). 
The authors of this work also recognize ICDS for providing access to computational research infrastructure within the Roar Core Facility (RRID: SCR\_026424).

\clearpage
\bibliography{references}

@article{renganathan2026surrogate,
  title={Surrogate-Guided Adaptive Importance Sampling for Failure Probability Estimation},
  author={Renganathan, Ashwin and Booth, Annie S},
  journal={arXiv preprint arXiv:2603.20959},
  year={2026}
}

@article{carlson2025multiobjective,
  title={Multiobjective aerodynamic design optimization of the NASA common research model},
  author={Carlson, Kade and Renganathan, Ashwin},
  journal={Aerospace Science and Technology},
  pages={111120},
  year={2025},
  publisher={Elsevier}
}

@article{jones1998efficient,
  title={Efficient global optimization of expensive black-box functions},
  author={Jones, Donald R and Schonlau, Matthias and Welch, William J},
  journal={Journal of Global optimization},
  volume={13},
  number={4},
  pages={455--492},
  year={1998},
  publisher={Springer}
}

@article{frazier2008knowledge,
  title={A knowledge-gradient policy for sequential information collection},
  author={Frazier, Peter I and Powell, Warren B and Dayanik, Savas},
  journal={SIAM Journal on Control and Optimization},
  volume={47},
  number={5},
  pages={2410--2439},
  year={2008},
  publisher={SIAM}
}

@article{renganathan2021lookahead,
  title={Lookahead acquisition functions for finite-horizon time-dependent bayesian optimization and application to quantum optimal control},
  author={Renganathan, S Ashwin and Larson, Jeffrey and Wild, Stefan M},
  journal={arXiv preprint arXiv:2105.09824},
  year={2021}
}

@inproceedings{renganathan2025qpots,
  title={qPOTS: Efficient Batch Multiobjective Bayesian Optimization via Pareto Optimal Thompson Sampling},
  author={Renganathan, Ashwin and Carlson, Kade},
  booktitle={International Conference on Artificial Intelligence and Statistics},
  pages={4051--4059},
  year={2025},
  organization={PMLR}
}

@inproceedings{gemseo,
	title        = {GEMS: a Python library for automation of multidisciplinary design optimization process generation},
	author       = {Gallard, Fran{\c{c}}ois and Vanaret, Charlie and Gu{\'e}not, Damien and Gachelin, Vincent and Lafage, R{\'e}mi and Pauwels, Benoit and Barjhoux, Pierre-Jean and Gazaix, Anne},
	year         = 2018,
	booktitle    = {2018 AIAA/ASCE/AHS/ASC Structures, Structural Dynamics, and Materials Conference},
	pages        = {0657}
}

@inproceedings{swersky2013multitask,
	title        = {Multi-Task Bayesian Optimization},
	author       = {Swersky, Kevin and Snoek, Jasper and Adams, Ryan P},
	year         = 2013,
	booktitle    = {Advances in Neural Information Processing Systems},
	publisher    = {Curran Associates, Inc.},
	volume       = 26,
	editor       = {C.J. Burges and L. Bottou and M. Welling and Z. Ghahramani and K.Q. Weinberger}
}

@article{chevalier2014uncertainty,
	title        = {Fast Parallel Kriging-Based Stepwise Uncertainty Reduction With Application to the Identification of an Excursion Set},
	author       = {Clément Chevalier and Julien Bect and David Ginsbourger and Emmanuel Vazquez and Victor Picheny and Yann Richet},
	year         = 2014,
	journal      = {Technometrics},
	publisher    = {Taylor \& Francis},
	volume       = 56,
	number       = 4,
	pages        = {455--465}
}

@article{cole2023entropycontour,
	title        = {Entropy-based adaptive design for contour finding and estimating reliability},
	author       = {D. Austin Cole and Robert B. Gramacy and James E. Warner and Geoffrey F. Bomarito and Patrick E. Leser and William P. Leser},
	year         = 2023,
	journal      = {Journal of Quality Technology},
	publisher    = {Taylor \& Francis},
	volume       = 55,
	number       = 1,
	pages        = {43--60}
}

@article{ranjan2008contour,
	title        = {Sequential Experiment Design for Contour Estimation From Complex Computer Codes},
	author       = {Pritam Ranjan and Derek Bingham and George Michailidis},
	year         = 2008,
	journal      = {Technometrics},
	publisher    = {Taylor \& Francis},
	volume       = 50,
	number       = 4,
	pages        = {527--541}
}

@inproceedings{marques2018contourentropy,
	title        = {Contour location via entropy reduction leveraging multiple information sources},
	author       = {Marques, Alexandre and Lam, Remi and Willcox, Karen},
	year         = 2018,
	booktitle    = {Advances in Neural Information Processing Systems},
	publisher    = {Curran Associates, Inc.},
	volume       = 31,
	editor       = {S. Bengio and H. Wallach and H. Larochelle and K. Grauman and N. Cesa-Bianchi and R. Garnett}
}

@article{booth2025contour,
	title        = {Contour location for reliability in airfoil siulation experiments using deep Gaussian processes},
	author       = {Booth, Annie S. and Renganathan, S. Ashwin and Gramacy, Robert B.},
	year         = 2025,
	journal      = {The Annals of Applied Statistics},
	volume       = 19,
	pages        = {191--211}
}

@article{booth2025failure,
	title        = {Two-stage design for failure probability estimation with Gaussian process surrogates},
	author       = {Booth, Annie and Renganathan, Ashwin},
	year         = 2025,
	month        = 10,
	journal      = {Journal of Quality Technology},
	volume       = 57,
	pages        = {1--17}
}

@article{kyzyurova2018gpchain,
	title        = {Coupling Computer Models through Linking Their Statistical Emulators},
	author       = {Kyzyurova, Ksenia N. and Berger, James O. and Wolpert, Robert L.},
	year         = 2018,
	journal      = {SIAM/ASA Journal on Uncertainty Quantification},
	volume       = 6,
	number       = 3,
	pages        = {1151--1171}
}

@article{sanson2019gpchain,
	title        = {Systems of Gaussian process models for directed chains of solvers},
	author       = {Francois Sanson and Olivier {Le Maitre} and Pietro Marco Congedo},
	year         = 2019,
	journal      = {Computer Methods in Applied Mechanics and Engineering},
	volume       = 352,
	pages        = {32--55},
	issn         = {0045-7825}
}

@article{jakeman2022adaptive,
	title        = {Adaptive Experimental Design for Multi-Fidelity Surrogate Modeling of Multi-Disciplinary Systems},
	author       = {Jakeman, John D. and Friedman, Sam and Eldred, Michael and Tamellini, Lorenzo and Gorodetsky, Alex A. and Allaire, Douglas},
	year         = 2022,
	journal      = {International Journal for Numerical Methods in Engineering},
	volume       = 123,
	number       = 12,
	pages        = {2760--2790}
}

@article{mckay1979lhs,
	title        = {A Comparison of Three Methods for Selecting Values of Input Variables in the Analysis of Output from a Computer Code},
	author       = {M. D. McKay and R. J. Beckman and W. J. Conover},
	year         = 1979,
	journal      = {Technometrics},
	publisher    = {[Taylor & Francis, Ltd., American Statistical Association, American Society for Quality]},
	volume       = 21,
	number       = 2,
	pages        = {239--245},
	issn         = {00401706},
	urldate      = {2026-01-27}
}

@book{gramacy2020surrogates,
	title        = {Surrogates: Gaussian process modeling, design and optimization for the applied sciences},
	author       = {Gramacy, Robert B.},
	year         = 2020,
	publisher    = {CRC Press}
}

@book{rasmussen,
	title        = {Gaussian Processes for Machine Learning},
	author       = {Rasmussen, C.E. and Williams, K.I.},
	year         = 2006,
	publisher    = {the MIT Press}
}

@book{mdobook,
	title        = {Engineering Design Optimization},
	author       = {Martins, Joaquim R.R.A. and Ning, Andrew},
	year         = 2021,
	publisher    = {Cambridge University Press}
}

@inproceedings{ghoreishi2020aerostructures,
	title        = {Bayesian Optimization for Efficient Design of Uncertain Coupled Multidisciplinary Systems},
	author       = {Ghoreishi, Seyede Fatemeh and Imani, Mahdi},
	year         = 2020,
	booktitle    = {2020 American Control Conference (ACC)},
	pages        = {3412--3418}
}

@article{Hennig2011,
	title        = {Entropy Search for Information-Efficient Global Optimization},
	author       = {Hennig, Philipp and Schuler, Christian},
	year         = 2011,
	month        = 12,
	journal      = {Journal of Machine Learning Research},
	volume       = 13
}

@article{Gray2019,
	title        = {OpenMDAO: an open-source framework for multidisciplinary design, analysis, and optimization},
	author       = {Gray, Justin S. and Hwang, John T. and Martins, Joaquim R. R. A. and Moore, Kenneth T. and Naylor, Bret A.},
	year         = 2019,
	month        = {Apr},
	day          = {01},
	journal      = {Structural and Multidisciplinary Optimization},
	volume       = 59,
	number       = 4,
	pages        = {1075--1104},
	issn         = {1615-1488}
}

@inproceedings{Balandat2020botorch,
	title        = {{BoTorch: A Framework for Efficient Monte-Carlo Bayesian Optimization}},
	author       = {Balandat, Maximilian and Karrer, Brian and Jiang, Daniel R. and Daulton, Samuel and Letham, Benjamin and Wilson, Andrew Gordon and Bakshy, Eytan},
	year         = 2020,
	booktitle    = {Advances in Neural Information Processing Systems 33}
}

@inproceedings{Gardner2018gpytorch,
	title        = {GPyTorch: Blackbox Matrix-Matrix Gaussian Process Inference with GPU Acceleration},
	author       = {Gardner, Jacob and Pleiss, Geoff and Weinberger, Kilian Q and Bindel, David and Wilson, Andrew G},
	year         = 2018,
	booktitle    = {Advances in Neural Information Processing Systems},
	publisher    = {Curran Associates, Inc.},
	volume       = 31,
	editor       = {S. Bengio and H. Wallach and H. Larochelle and K. Grauman and N. Cesa-Bianchi and R. Garnett}
}

@inproceedings{Bonilla2007multitask,
	title        = {Multi-task Gaussian Process Prediction},
	author       = {Bonilla, Edwin V and Chai, Kian and Williams, Christopher},
	year         = 2007,
	booktitle    = {Advances in Neural Information Processing Systems},
	publisher    = {Curran Associates, Inc.},
	volume       = 20,
	editor       = {J. Platt and D. Koller and Y. Singer and S. Roweis}
}

@article{Sankararaman2012satellite,
	title        = {Likelihood-Based Approach to Multidisciplinary Analysis Under Uncertainty},
	author       = {Sankararaman, Shankar and Mahadevan, Sankaran},
	year         = 2012,
	month        = {03},
	journal      = {Journal of Mechanical Design},
	volume       = 134,
	number       = 3,
	pages        = {031008},
	issn         = {1050-0472}
}

@article{Yao2012MDO,
	title        = {A surrogate based multistage-multilevel optimization procedure for multidisciplinary design optimization},
	author       = {Yao, Wen and Chen, Xiaoqian and Ouyang, Qi and van Tooren, Michel},
	year         = 2012,
	month        = {Apr},
	day          = {01},
	journal      = {Structural and Multidisciplinary Optimization},
	volume       = 45,
	number       = 4,
	pages        = {559--574},
	issn         = {1615-1488}
}

@article{Hu2017mdra,
	title        = {A surrogate modeling approach for reliability analysis of a multidisciplinary system with spatio-temporal output},
	author       = {Hu, Zhen and Mahadevan, Sankaran},
	year         = 2017,
	month        = {Sep},
	day          = {01},
	journal      = {Structural and Multidisciplinary Optimization},
	volume       = 56,
	number       = 3,
	pages        = {553--569},
	issn         = {1615-1488}
}

@article{simpson2001kriging,
	title        = {Kriging Models for Global Approximation in Simulation-Based Multidisciplinary Design Optimization},
	author       = {Simpson, Timothy W. and Mauery, Timothy M. and Korte, John J. and Mistree, Farrokh},
	year         = 2001,
	journal      = {AIAA Journal},
	volume       = 39,
	number       = 12,
	pages        = {2233--2241}
}

@book{forrester2008engineering,
	title        = {Engineering Design via Surrogate Modelling: A Practical Guide},
	author       = {Forrester, Alexander I. J. and S{\'o}bester, Andr{\'a}s and Keane, Andy J.},
	year         = 2008,
	publisher    = {Wiley},
	address      = {Hoboken, NJ}
}

@article{martins2013multidisciplinary,
	title        = {Multidisciplinary Design Optimization: A Survey of Architectures},
	author       = {Martins, Joaquim R. R. A. and Lambe, Andrew B.},
	year         = 2013,
	journal      = {AIAA Journal},
	volume       = 51,
	number       = 9,
	pages        = {2049--2075}
}

@article{dubreuil2020towards,
	title        = {Towards an Efficient Global Multidisciplinary Design Optimization Algorithm},
	author       = {Dubreuil, Sylvain and Bartoli, Nathalie and Gogu, Christian and Lefebvre, Thierry},
	year         = 2020,
	journal      = {Structural and Multidisciplinary Optimization},
	volume       = 62,
	number       = 4,
	pages        = {1739--1765}
}

@article{cardoso2024constrained,
	title        = {Constrained Efficient Global Multidisciplinary Design Optimization Using Adaptive Disciplinary Surrogate Enrichment},
	author       = {Cardoso, In{\^e}s and Dubreuil, Sylvain and Bartoli, Nathalie and Gogu, Christian and Sala{\"u}n, Michel},
	year         = 2024,
	journal      = {Structural and Multidisciplinary Optimization},
	volume       = 67,
	number       = 2,
	pages        = 23
}

@article{berthelin2022disciplinary,
	title        = {Disciplinary Proper Orthogonal Decomposition and Interpolation for the Resolution of Parameterized Multidisciplinary Analysis},
	author       = {Berthelin, Gaspard and Dubreuil, Sylvain and Sala{\"u}n, Michel and Bartoli, Nathalie and Gogu, Christian},
	year         = 2022,
	journal      = {International Journal for Numerical Methods in Engineering},
	volume       = 123,
	number       = 15,
	pages        = {3594--3626}
}

@article{chaudhuri2018multifidelity,
	title        = {Multifidelity Uncertainty Propagation via Adaptive Surrogates in Coupled Multidisciplinary Systems},
	author       = {Chaudhuri, Anirban and Lam, Remi and Willcox, Karen},
	year         = 2018,
	journal      = {AIAA Journal},
	volume       = 56,
	number       = 1,
	pages        = {235--249}
}

@article{ghoreishi2017adaptive,
	title        = {Adaptive Uncertainty Propagation for Coupled Multidisciplinary Systems},
	author       = {Ghoreishi, Seyede Fatemeh and Allaire, Douglas L.},
	year         = 2017,
	journal      = {AIAA Journal},
	volume       = 55,
	number       = 11,
	pages        = {3940--3950}
}

@article{ghoreishi2021bayesian,
	title        = {Bayesian Surrogate Learning for Uncertainty Analysis of Coupled Multidisciplinary Systems},
	author       = {Ghoreishi, Seyede Fatemeh and Imani, Mahdi},
	year         = 2021,
	journal      = {Journal of Computing and Information Science in Engineering},
	volume       = 21,
	number       = 4,
	pages        = {041009}
}

@article{asadi2024active,
	title        = {Active Learning for Efficient Data Acquiring in Coupled Multidisciplinary Systems},
	author       = {Asadi, Negar and Ghoreishi, Seyede Fatemeh},
	year         = 2024,
	journal      = {IFAC-PapersOnLine},
	volume       = 58,
	number       = 28,
	pages        = {114--119}
}

@article{baptista2018optimal,
	title        = {Optimal Approximations of Coupling in Multidisciplinary Models},
	author       = {Baptista, Ricardo and Marzouk, Youssef and Willcox, Karen and Peherstorfer, Benjamin},
	year         = 2018,
	journal      = {AIAA Journal},
	volume       = 56,
	number       = 6,
	pages        = {2412--2428}
}

@article{alvarez2012kernels,
	title        = {Kernels for Vector-Valued Functions: A Review},
	author       = {{\'A}lvarez, Mauricio A. and Rosasco, Lorenzo and Lawrence, Neil D.},
	year         = 2012,
	journal      = {Foundations and Trends in Machine Learning},
	volume       = 4,
	number       = 3,
	pages        = {195--266}
}

@article{renganathan2023camera,
	title        = {CAMERA: A method for cost-aware, adaptive, multifidelity, efficient reliability analysis},
	author       = {Renganathan, S Ashwin and Rao, Vishwas and Navon, Ionel M},
	year         = 2023,
	journal      = {Journal of Computational Physics},
	publisher    = {Elsevier},
	volume       = 472,
	pages        = 111698
}

@inproceedings{booth2024actively,
	title        = {Actively learning deep Gaussian process models for failure contour and probability estimation.},
	author       = {Booth, Annie S and Gramacy, Robert and Renganathan, Ashwin},
	year         = 2024,
	booktitle    = {AIAA SCITECH 2024 Forum},
	pages        = {0577}
}

@inproceedings{renganathan2024efficient,
	title        = {Efficient reliability analysis with multifidelity Gaussian processes and normalizing flows},
	author       = {Renganathan, Ashwin},
	year         = 2024,
	booktitle    = {AIAA SCITECH 2024 Forum},
	pages        = {0576}
}

@book{nocedal2006numerical,
	title        = {Numerical optimization},
	author       = {Nocedal, Jorge and Wright, Stephen J},
	year         = 2006,
	publisher    = {Springer}
}

@article{sobieszczanski-sobieski_sensitivity_1990,
    title = {Sensitivity of complex, internally coupled systems},
    volume = {28},
    issn = {0001-1452, 1533-385X},
    url = {https://arc.aiaa.org/doi/10.2514/3.10366},
    doi = {10.2514/3.10366},
    language = {en},
    number = {1},
    urldate = {2023-03-08},
    journal = {AIAA Journal},
    author = {Sobieszczanski-Sobieski, Jaroslaw},
    month = jan,
    year = {1990},
    pages = {153--160},
}

\end{document}